\let\ACMmaketitle=\maketitle
\renewcommand{\maketitle}{\begingroup\let\footnote=\thanks \ACMmaketitle\endgroup}
\newcommand{\tot}{\text{tot}}
\newcommand{\Q}[1]{\mathcal{Q}^{(#1)}}
\newcommand{\Pcap}[1]{\mathcal{P}^{(#1)}}
\newcommand{\Ccap}[1]{\mathcal{C}^{(#1)}}
\newtheorem{theorem}{Theorem}
\newtheorem*{definition*}{Definition}
\newtheorem{lemma}{Lemma}[section]
\newtheorem*{conjecture*}{Conjecture}
\theoremstyle{definition}
\newcommand{\PRLsep}{\noindent\makebox[\linewidth]{\resizebox{0.3333\linewidth}{1pt}{$\blacklozenge$}}\bigskip}
\begin{document}
\title{Simultaneous superadditivity of the direct and complementary channel capacities}
\author{Satvik Singh}
\author{Sergii Strelchuk}
\affiliation{DAMTP, Centre for Mathematical Sciences, University of Cambridge, Cambridge CB30WA, UK}

\begin{abstract}
Quantum communication channels differ from their classical counterparts because their capacities can be superadditive. The principle of monogamy of entanglement suggests that superadditive improvements in the transmission capacity of a channel should reduce the amount of information loss to the environment. We challenge this intuition  by demonstrating that the coherent and private information of a channel and its complement can be simultaneously superadditive for arbitrarily many channel uses. To quantify the limits of this effect,
we consider the notion of max (resp. total)  private information of a channel, which represents the maximum (resp. sum) of the private information of the channel itself and its complement, and study its relationship with the coherent information of the individual direct and complementary channels.   
For a varying number of channel uses, we show that these quantities can obey different interleaving sequences of inequalities. 
\end{abstract}

\maketitle

Quantum channels have several intriguing properties that separate them from their classical counterparts. 
One of them is the ability to send information superadditively \cite{divincenzo1998quantum}, whereby multiple uses of the same channel increase the amount of information that it can reliably transmit. Another one, namely superactivation \cite{Smith2008superactivate},  demonstrates how some channels, having initially no capacity to send information can regain it when combined with an equally useless zero-capacity channel.
Both of these effects were introduced and subsequently studied in the context of having access to a channel ${\cal N}_{A\to B}$ where Alice, the sender, communicates with the receiver, Bob. 

One can define different kinds of capacities of $\mathcal{N}$ depending on the type of information being sent. The quantum capacity ${\cal Q}(\cal N)$ of $\mathcal{N}$ quantifies the maximum rate at which Alice can send quantum information reliably to Bob and can be expressed as a regularization of the channel's coherent information: $\mathcal{Q}(\mathcal{N}) = \lim_{k\to \infty} \mathcal{Q}^{(k)}(\mathcal{N}), \mathcal{Q}^{(k)}(\mathcal{N}) := \mathcal{Q}^{(1)}(\mathcal{N}^{\otimes k})/k, \mathcal{Q}^{(1)}(\mathcal{N}) := \sup_{\rho_A} [S(B)-S(E)]$.
The optimization here is over all states $\rho_A$ and the von Neumann entropies $S(B)$ and $S(E)$ are evaluated on ${\cal N}_{A\to B}(\rho_A)$ and ${\cal N}^c_{A\to E}(\rho_A)$, respectively, where the complementary channel ${\cal N}^c_{A\to E}$ models the nature of information loss to the environment (Eve) (its precise definition will be introduced shortly). Similarly, the classical capacity ${\cal C}({\cal N})$ of $\mathcal{N}$ quantifies the maximum rate at which Alice can send classical information reliably to Bob. In addition, if the information being sent is to be kept private from Eve, one obtains the private capacity ${\cal P}({\cal N})$ of the channel. These capacities admit regularized expressions in terms of the channel's Holevo information: $\mathcal{C}(\mathcal{N}) = \lim_{k\to \infty}  \mathcal{C}^{(k)}(\mathcal{N}),
    \mathcal{C}^{(k)}(\mathcal{N}) := \mathcal{C}^{(1)}(\mathcal{N}^{\otimes k})/k,
    \mathcal{C}^{(1)}(\mathcal{N}) := \sup_{\rho_{XA}} I(X:B),$ and private information $\mathcal{P}(\mathcal{N}) = \lim_{k\to \infty}  \mathcal{P}^{(k)}(\mathcal{N}), 
    \mathcal{P}^{(k)}(\mathcal{N}) := \mathcal{P}^{(1)}(\mathcal{N}^{\otimes k})/k,
    \mathcal{P}^{(1)}(\mathcal{N}) := \sup_{\rho_{XA}} [I(X:B) - I(X:E) ].$ The optimizations above are over all classical quantum states $\rho_{XA}=\sum_{x} p_x \ketbra{x}_X \otimes \rho^x_A$ and the mutual information terms $I(X:B)$ and $I(X:E)$ are evaluated on $\mathcal{N}_{A\to B}(\rho_{XA})$ and $\mathcal{N}^c_{A\to E}(\rho_{XA})$, respectively.

Since the first demonstration~\cite{divincenzo1998quantum}, there have been a variety of results that illustrate striking superadditive behavior of quantum channel capacities~\cite{wilde2013quantum,zhu2018superadditivity,zhu2017superadditivity,cubitt2015unbounded, Elkouss2015private,elkouss2016nonconvexity,smith2009extensive,brandao2012does,shirokov2015superactivation,leditzky2022generic,leung2014maximal,li2009private}. For instance, the $k$-letter coherent and private information of a channel obey the inequality: ${\cal Q}^{(k)}({\cal N})\le {\cal P}^{(k)}({\cal N}) \,\, \forall k$. However, for different numbers of channel uses, the coherent information of $\cal N$ can exceed its private information \cite{Elkouss2015private}, making the former inequality valid only for a fixed $k$.

The above effects were all demonstrated in the setting where Alice optimizes her data transmission rate to Bob while minimizing information `leakage' to Eve, who behaves as a non-participating party during transmission. There are several results that investigate information transmission problems under non-trivial behaviour of the environment~\cite{karumanchi2016classical, oskouei2021capacities,winter2005environment}. In one of the earliest works~\cite{winter2005environment} of such kind, the authors investigated the capacities of quantum channels by allowing Eve to locally
measure and communicate classical messages to Bob. Later, this was extended to allow a helper \cite{karumanchi2016classical} -- a benevolent third party -- who can adjust the environment state. This enabled one to derive streamlined examples for super-additivity due to the extra abilities of the helper to adjust the state of the environment depending on the message being sent. For example, the so-called locking capacity~\cite{guha2014quantum} of a channel in this regime is superadditive, whereas in the absence of auxiliary resources the question is still open.

The above scenarios supplement the direct channel ${\cal N}_{A\to B}$ with extra resources which are extrinsic to its definition. This precludes one from learning about the total capacity for information transmission by using the channel alone. 
One may observe that defining a direct channel ${\cal N}_{A\to B}$ from Alice to Bob fixes the behaviour of information loss to the environment (Eve) via the complementary channel $\mathcal{N}^c_{A\to E}$. Indeed, the Stinespring dilation theorem~\cite{stinespring1955positive,paulsen2002completely} shows that there exists an isometry $V:\mathcal{H}_A \to \mathcal{H}_B\otimes \mathcal{H}_E$ such that $\mathcal{N}(X)=\operatorname{Tr}_E(VX V^\dagger)$ and $\mathcal{N}^c (X)  = \operatorname{Tr}_B(VXV^\dagger)$. 

When optimizing the communication rates, we naturally want to take full advantage of the superadditive properties of the channel. The monogamy of entanglement principle suggests that when the direct channel is superadditive, its complementary channel to the environment is likely to have constrained data transmission capabilities. Contrary to this intuition, we show that superadditivity can persist for an unbounded number of channel uses in the strongest possible sense in both the direct and complementary channels \emph{simultaneously}.

To demonstrate this surprising effect we consider two quantities: the \emph{max} and \emph{total} coherent information of $\cal N$: 
\begin{align}
    \mathcal{Q}^{(k)}_{\max}(\mathcal{N}) &:= \max\{ \mathcal{Q}^{(k)}(\mathcal{N}), \mathcal{Q}^{(k)}(\mathcal{N}^c) \}, \\
    \mathcal{Q}^{(k)}_{\tot}(\mathcal{N}) &:= \mathcal{Q}^{(k)}(\mathcal{N}) + \mathcal{Q}^{(k)}(\mathcal{N}^c).
\end{align}
The max- and total Holevo and private information quantities can be defined similarly. The \emph{max} and \emph{total} quantum capacities can be obtained by taking $k\to \infty$ limits: 
\begin{align}
    \lim_{k\to \infty}\mathcal{Q}^{(k)}_{\max}(\mathcal{N}) &= \max\{ \mathcal{Q}(\mathcal{N}), \mathcal{Q}(\mathcal{N}^c) \}, \\
    \lim_{k \to \infty}\mathcal{Q}^{(k)}_{\tot}(\mathcal{N}) &= \mathcal{Q}(\mathcal{N}) + \mathcal{Q}(\mathcal{N}^c).
\end{align}
Operationally, Bob and Eve are now placed on an equal footing and Alice simply wants to send information at the best rate possible regardless of who plays the role of the receiver. She can either use the direct channel $\mathcal{N}_{A\to B}$ or its complement $\mathcal{N}^c_{A\to E}$ to do so, thus arriving at the max rate. In such a scenario, it is crucial to analyse the superadditive behaviour of both the direct and complementary channels together to determine which one has higher capacity to transmit information. A similar quantity was introduced in \cite{Singh2022main} in the context of entanglement distillation. On the other hand, to our best knowledge, the expression for the total quantum capacity first appeared in \cite{hirche2022bounding}, where it was used to bound the difference between the quantum and private capacities of any channel $\mathcal{N}$. Intuitively, being the sum of the direct and complementary channel capacities, the total capacity quantifies the overall information transmission capability of $\mathcal{N}$.

The above setting should be distinguished from that of quantum broadcast channels~\cite{yard2011quantum,laurenza2017general} where two (or more) recipients (Bobs) share a joint environment. As such, this model is not representative of the concepts of max and total information where the notion of the environment does not feature. 

We now briefly describe our results. For $n,\alpha\in \mathbb{N}$ satisfying $n^{\alpha-2}>8$, we construct a channel ${\cal N}$ such that (Theorem~\ref{th:main1}): 
\begin{align}\label{eq:main1}
\forall k\leq n: \quad \Q{k+1}({\cal N}), \Q{k+1}({\cal N}^c) > {\cal P}^{(k)}_{\max}({\cal N}).
\end{align}
Thus, for any number $k$ of channel uses, the max $k$-letter private information of $\cal N$ can be exceeded by the coherent information of both the direct and complementary channels by using just one extra copy of each of these channels. This means that the coherent and private information quantities of $\cal N$ are curiously interleaved:
\begin{align*}
    \Q{1}({\cal N}), \Q{1}({\cal N}^c)\leq {\cal P}^{(1)}_{\max}({\cal N}) &<  \Q{2}({\cal N}), \Q{2}({\cal N}^c) \\ 
    & \leq {\cal P}^{(2)}_{\max}({\cal N}) < \ldots 
\end{align*}

Remarkably, by choosing $n$ large enough, this phenomenon can be made to persist for arbitrarily many channel uses. Moreover, the parameter $p$ can be tuned to boost the superadditivity of the direct channel relative to its complement or vice versa. More precisely, when $1/3\leq p \leq 1/2 - 1/n^{\alpha -1}$ and $n^{\alpha -2}>12$, even though both the direct and complementary channels are still superadditive, Eq.~\eqref{eq:main1} now only holds for $\cal N$ and not for ${\cal N}^c$. Thus, the superadditivity of the direct channel dominates that of its complement (Theorem~\ref{th:main2}). 
For even smaller values of $p$, this effect becomes extreme: even the total $k-$letter private information of the channel (below a certain threshold $k$ value) can be exceeded by the coherent information of the direct channel alone, provided that it is used sufficiently many times $j>c_k$ (Theorem~\ref{th:main3}):
\begin{equation}
    \Q{j}({\cal N}) > {\cal P}^{(k)}({\cal N}) + {\cal P}^{(k)} ({\cal N}^c) = \Pcap{k}_{\tot} ({\cal N}). 
\end{equation}
In all the above cases, it is possible to precisely quantify the effects of superaddivity by computing lower bounds on the difference quantities such as $\Q{k+1}({\cal N})- \Pcap{k}_{\max}({\cal N})$ and $\Q{k+1}({\cal N}^c)- \Pcap{k}({\cal N}^c)$. 

Finally, it turns out that simultaneous superadditivity of both the direct and complementary channels is a non-trivial phenomenon. We prove this by constructing a channel with superadditive quantum capacity whose complement has additive capacity.

\noindent
{\it {The main construction.--}}
Our channels ${\cal N}_{n,p,d}$ are made of two building blocks: the erasure and `rocket' channels. The $d$-dimensional erasure channel $\mathcal{E}_{p,d}:A\to B$ with erasure parameter $p\in [0,1]$ takes a $d$-dimensional input and replaces it with an erasure flag $\ketbra{e}$ (orthogonal to the input space) with probability $p$ and does nothing to it otherwise: $\mathcal{E}_{p,d}(\rho) = (1-p)\rho + p\operatorname{Tr}(\rho)\ketbra{e}$. Its complement is again of the erasure type: $\mathcal{E}_{p,d}^c = \mathcal{E}_{1-p,d}$.
The capacities of $\mathcal{E}_{p,d}$ are well known: $\mathcal{Q}(\mathcal{E}_{p,d}) = \mathcal{P}(\mathcal{E}_{p,d}) = \max\{ (1-2p)\log d,0\}, \, \mathcal{C}(\mathcal{E}_{p,d}) = (1-p)\log d.$

The $d$-dimensional rocket channel \cite{Smith2009private}  $R_d: A_1 \otimes A_2 \to B$ takes two $d$-dimensional quantum systems $(A_1 \text{ and } A_2)$ as inputs and applies local random unitaries on each input \footnote{It actually suffices to apply local unitaries from a unitary 2-design, such as the Clifford group \cite{christoph2009design}.} followed by a controlled phase coupling $P=\sum_{i,j}\omega^{ij}\ketbra{i}_{A_1}\otimes \ketbra{j}_{A_2}$, where $\omega=e^{i2\pi /d}$. Finally, $A_2$ is discarded and Bob gets $A_1$ along with classical information about which local unitaries were applied. For the complementary channel $R_d^c :A_1\otimes A_2\to E$, $A_1$ is discarded and Eve gets $A_2$ along with the same classical information about the local unitaries. Since $R_d$ dephases the input registers in a random basis unknown to Alice, it has little capacity to transmit information on its own: $\mathcal{C}(R_d) \leq 2$. The same argument applies to $R^c_d$ as well: $\mathcal{C}(R_d^c)\leq 2$ (the proof of~\cite{Smith2009private} goes through by swapping labels for Bob and Eve).
However, when Alice and Bob already share a maximally entangled state (this can be achieved with probability $1-p$ by using the erasure $\mathcal{E}_{p,d}$), it turns out that Bob can undo the random phase coupling, thus allowing Alice to send quantum information at rate $\log d$ \cite{Smith2009private}. More precisely, we have
\begin{equation}\label{eq:rocket-erasure-supadd}
    \mathcal{Q}^{(1)}(R_d \otimes \mathcal{E}_{p,d}) \geq (1-p)\log d.
\end{equation}
A slight modification of this argument can be used to obtain the same result for $R^c_d$ as well:
\begin{equation}\label{eq:rocket-erasure-comp-supadd}
    \mathcal{Q}^{(1)}(R^c_d \otimes \mathcal{E}_{p,d}) \geq (1-p)\log d.
\end{equation}

An intuitive graphical proof of the above two claims (Eqs.~\eqref{eq:rocket-erasure-supadd} and \eqref{eq:rocket-erasure-comp-supadd}) is provided in Appendix~\ref{appen:rocket}.

We are now ready to introduce our main channels. For $n,d\in \mathbb{N}$ and $p\in [0,1]$, we define 
\begin{align}
    \mathcal{N}_{n,p,d} &:= R^{\otimes n}_{d} \oplus \mathcal{E}_{p,d}, \\
    \mathcal{N}^c_{n,p,d} &= R^{ c \,\otimes n}_{d} \oplus \mathcal{E}_{1-p,d}. 
\end{align}
The direct sum construction \cite{Fukuda2007additivity} allows Alice to control which of the two channels is being applied at the outset, so that the coherent and private information of such channels is just the maximum of its building blocks, see Lemma~\ref{lemma:directsum}. In our case, by using the channel $k+1$ times, Alice gains access to blocks of the form 
\begin{equation}
    R_d^{\otimes n} \otimes {\cal E}_{p,d}^{\otimes k} \,\,\,\text{or} \,\, R_d^{c\, \otimes n} \otimes \mathcal{E}_{1-p,d}^{\otimes k},
\end{equation}
for which the superadditivty effects observed in Eqs.~\eqref{eq:rocket-erasure-supadd}, and ~\eqref{eq:rocket-erasure-comp-supadd} can boost information transmission rates for each successive channel use as long as $k\leq n$. For $p=1/2$, these effects are identical for both the direct and complementary channels. The parameter $p$ in Eqs.~\eqref{eq:rocket-erasure-supadd},\eqref{eq:rocket-erasure-comp-supadd} can be adjusted to enhance the superadditivity of either the direct channel or its complement relative to the other. We employ these ideas to prove our main results below.

\begin{theorem}\label{th:main1}
Let $p=1/2$, and $n,\alpha\in \mathbb{N}$ be such that $n^{\alpha -2}>8$. Furthermore, let $d=2^{n^\alpha}$ and ${\cal N}= {\cal N}_{n,p,d}$. Then, for all $k\leq n:$
\begin{align}
  \Q{k+1}({\cal N}) - {\cal P}^{(k)}_{\max}({\cal N}) &\geq \frac{n^{\alpha} - 4n(k+1)}{2k(k+1)}>0, \\
  \Q{k+1}({\cal N}^c) - {\cal P}^{(k)}_{\max}({\cal N}) &\geq \frac{n^{\alpha} - 4n(k+1)}{2k(k+1)}>0,
\end{align}
\end{theorem}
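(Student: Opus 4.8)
The plan is to sandwich $\Q{k+1}(\mathcal{N})-\Pcap{k}_{\max}(\mathcal{N})$ (and likewise the version with $\mathcal{N}^c$ in the first slot) between a lower bound on $\Q{k+1}$ and an upper bound on $\Pcap{k}_{\max}$; I will show these are $\frac{k n^\alpha}{2(k+1)}$ and $\frac{(k-1)n^\alpha+4n}{2k}$ respectively, whose difference over the common denominator $2k(k+1)$ is exactly the claimed $\frac{n^\alpha-4n(k+1)}{2k(k+1)}$. Everything reduces, via Lemma~\ref{lemma:directsum}, to analysing tensor-product branches: since $\mathcal{N}^{\otimes m}=(R_d^{\otimes n}\oplus\mathcal{E}_{1/2,d})^{\otimes m}$ is a direct sum, over subsets $S\subseteq\{1,\dots,m\}$, of the channels $R_d^{\otimes n|S|}\otimes\mathcal{E}_{1/2,d}^{\otimes(m-|S|)}$, both $\Q{1}$ and $\Pcap{1}$ of $\mathcal{N}^{\otimes m}$ equal the maximum of that quantity over these branches (and the same holds for $\mathcal{N}^c$, with $R_d^c$ in place of $R_d$, using $\mathcal{E}_{1/2,d}^c=\mathcal{E}_{1/2,d}$ since $p=1/2$).

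For the lower bound I would take $m=k+1$ and retain only the branch $R_d^{\otimes n}\otimes\mathcal{E}_{1/2,d}^{\otimes k}$. As $k\le n$, I regroup it as $(R_d\otimes\mathcal{E}_{1/2,d})^{\otimes k}\otimes R_d^{\otimes(n-k)}$ and combine superadditivity of the one-shot coherent information under tensor products with $\Q{1}(R_d)\ge0$ and Eq.~\eqref{eq:rocket-erasure-supadd} (at $p=1/2$, $\log d=n^\alpha$): this gives $\Q{1}(\mathcal{N}^{\otimes(k+1)})\ge k\,\Q{1}(R_d\otimes\mathcal{E}_{1/2,d})\ge\frac{k}{2}n^\alpha$, so $\Q{k+1}(\mathcal{N})\ge\frac{kn^\alpha}{2(k+1)}$; using Eq.~\eqref{eq:rocket-erasure-comp-supadd} instead yields the same bound for $\Q{k+1}(\mathcal{N}^c)$.

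For the upper bound I would take $m=k$ and bound $\Pcap{1}$ of each branch $\mathcal{M}_j:=R_d^{\otimes nj}\otimes\mathcal{E}_{1/2,d}^{\otimes(k-j)}$, $0\le j\le k$, separately. For $j=0$ the branch $\mathcal{E}_{1/2,d}^{\otimes k}$ is antidegradable (since $\mathcal{E}_{1/2,d}$ equals its own complement), so $\Pcap{1}(\mathcal{M}_0)=0$; equivalently this follows from $\mathcal{P}(\mathcal{E}_{1/2,d})=0$ and superadditivity of the private information. For $j\ge1$ I would use $\Pcap{1}(\mathcal{M}_j)\le\mathcal{C}(\mathcal{M}_j)$ together with the additivity of the erasure channel's classical capacity when combined with an arbitrary channel, the value $\mathcal{C}(\mathcal{E}_{1/2,d})=\frac12\log d=\frac{n^\alpha}{2}$, and $\mathcal{C}(R_d^{\otimes nj})=nj\,\mathcal{C}(R_d)\le2nj$, to get $\Pcap{1}(\mathcal{M}_j)\le 2nj+(k-j)\frac{n^\alpha}{2}$. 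Since $n^{\alpha-2}>8$ makes $\frac{n^\alpha}{2}>2n$, this bound is decreasing in $j$, so the maximum over all branches is attained at $j=1$: hence $\Pcap{1}(\mathcal{N}^{\otimes k})\le 2n+(k-1)\frac{n^\alpha}{2}$, i.e.\ $\Pcap{k}(\mathcal{N})\le\frac{(k-1)n^\alpha+4n}{2k}$, and the identical argument gives the same bound for $\Pcap{k}(\mathcal{N}^c)$, hence for $\Pcap{k}_{\max}(\mathcal{N})$.

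Subtracting the two bounds and simplifying gives $\Q{k+1}(\mathcal{N})-\Pcap{k}_{\max}(\mathcal{N})\ge\frac{n^\alpha-4n(k+1)}{2k(k+1)}$, and the same with $\mathcal{N}^c$; this is strictly positive because $n^\alpha-4n(k+1)=n\,(n^{\alpha-1}-4(k+1))$ and $n^{\alpha-1}=n\cdot n^{\alpha-2}>8n\ge4(n+1)\ge4(k+1)$ for $k\le n$. The step I expect to be the main obstacle is the branch-wise upper bound on $\Pcap{1}(\mathcal{M}_j)$: bounding it by the full classical capacity of $\mathcal{M}_j$ is hopeless for erasure-heavy branches, since at $p=1/2$ the erasure channel has full classical capacity $\frac{n^\alpha}{2}$ per use but zero private capacity, so one must simultaneously exploit the vanishing private capacity of $\mathcal{E}_{1/2,d}$ (which handles $j=0$) and the vanishing classical capacity of $R_d$ (which handles $j\ge1$), combined through the additivity of the erasure channel; and the hypothesis $n^{\alpha-2}>8$ is precisely what keeps the $O(1)$ per-copy rocket contributions from overwhelming the budget $\frac{(k-1)n^\alpha+4n}{2k}$.
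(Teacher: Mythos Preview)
Your proof is correct and follows essentially the same approach as the paper: you decompose $\mathcal{N}^{\otimes m}$ into its direct-sum branches via Lemma~\ref{lemma:directsum}, lower-bound $\Q{k+1}$ by the single branch $R_d^{\otimes n}\otimes\mathcal{E}_{1/2,d}^{\otimes k}$ using Eq.~\eqref{eq:rocket-erasure-supadd} (resp.\ Eq.~\eqref{eq:rocket-erasure-comp-supadd}), upper-bound $\Pcap{k}$ branch-by-branch via the classical capacity together with the additivity of $\Ccap{1}$ against erasure \cite[Lemma~2]{Elkouss2015private}, and subtract. Your treatment is in fact slightly more explicit than the paper's (you spell out the $j=0$ case via antidegradability and the monotonicity of the branch bound in $j$), but the ingredients and the resulting estimate $\frac{n^\alpha-4n(k+1)}{2k(k+1)}$ are identical.
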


\begin{proof}
    Using Lemma~\ref{lemma:directsum}, and the fact that for any channel $\cal N$, $\Ccap{1} ({\cal N}\otimes {\cal E}_{p,d}) = \Ccap{1}({\cal N})+ \Ccap{1} ({\cal E}_{p,d})$ \cite[Lemma 2]{Elkouss2015private}, we get:
\begin{align}
    \mathcal{P}^{(k)}( \mathcal{N}) &= \frac{1}{k}\max_{0\leq l\leq k} \mathcal{P}^{(1)}(R_d^{\otimes nl} \otimes \mathcal{E}_{1/2,d}^{\otimes k-l})  \nonumber \\ 
    &\leq \max \begin{cases}
        2n \\
        \frac{2n}{k} + \frac{k-1}{k} \frac{1}{2}\log d \\
        0
    \end{cases} \nonumber \\ 
    &= \frac{2n}{k} + \frac{(k-1)n^{\alpha}}{2k}.
\end{align}
 Analogous reasoning yields 
 a bound for $\mathcal{P}^{(k)}( \mathcal{N}^c)$. Combining Lemma~\ref{lemma:directsum}  with Eq.~\eqref{eq:rocket-erasure-supadd} and the fact that $\mathcal{Q}^{(1)}(\mathcal{N}_1\otimes \mathcal{N}_2)\geq \mathcal{Q}^{(1)}(\mathcal{N}_1)+\mathcal{Q}^{(1)}(\mathcal{N}_2)$ yields a simple lower bound: 
\begin{equation*}
    \mathcal{Q}^{(k+1)}( \mathcal{N}) \geq \frac{\mathcal{Q}^{(1)}( R_d^{\otimes n} \otimes \mathcal{E}_{1/2,d}^{\otimes k})}{k+1} \geq \frac{kn^{\alpha}}{2(k+1)}
\end{equation*}
for $k\leq n$. Swapping Eq.~\eqref{eq:rocket-erasure-supadd} with Eq.~\eqref{eq:rocket-erasure-comp-supadd} shows that the same bound holds for $\mathcal{Q}^{(k+1)}( \mathcal{N}^c)$ too. Thus,
\begin{align*}
 \Q{k+1}({\cal N}) - {\cal P}^{(k)}_{\max}({\cal N}) &\geq 
 \frac{n^{\alpha} - 4n(k+1)}{2k(k+1)}>0,
\end{align*}
where the latter inequality holds because $n^{\alpha-2}>8$. Clearly, the same bounds hold for ${\cal N}^c$ as well.
\end{proof}

When $p < 1/2$, the superadditivity analysis of ${\cal N}_{n,p,d}$ becomes tedious. In Lemma~\ref{lemma:bounds}, we prove several capacity bounds which we use to show that the superadditivity
of the direct channel dominates that of its complement:
\begin{theorem}\label{th:main2}
Let $p\in [0,1]$, and $n,\alpha\in \mathbb{N}$ be such that $1/3 < p \leq 1/2 - 1/n^{\alpha-1}$ and $n^{\alpha -2}>12$. Furthermore, let $d=2^{n^\alpha}$ and ${\cal N}= {\cal N}_{n,p,d}$. Then, 
\begin{align}
    \mathcal{Q}^{(k+1)}( {\cal N}) - \mathcal{P}^{(k)}_{\max}( {\cal N}) &\geq \frac{n^{\alpha}(1-p) - 2n(k+1)}{k(k+1)} \nonumber  \\ 
    &=: f_{n,p,\alpha}(k) > 0, \label{eq:thmain1} \\
    \mathcal{Q}^{(k+1)}({\cal N}^c) - \mathcal{P}^{(k)}({\cal N}^c) &\geq \frac{n^{\alpha}p - 2n(k+1)}{k(k+1)} \nonumber \\
    &=: f^c_{n,p,a}(k) > 0, \label{eq:thmain2} 
\end{align}
where the first bound holds for $2\leq k\leq n$ and the second bound holds for $1\leq k\leq n$.
\end{theorem}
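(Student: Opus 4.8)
The plan is to follow the same template as the proof of Theorem~\ref{th:main1}, but with the erasure parameter $p$ taking a general value in the stated range rather than $1/2$. The two building-block facts we need are: (i) the direct-sum identity of Lemma~\ref{lemma:directsum}, which lets us write $\mathcal{P}^{(k)}(\mathcal{N})$ and $\mathcal{Q}^{(k+1)}(\mathcal{N})$ as a maximum over $l$ of the corresponding quantities for the tensor-product blocks $R_d^{\otimes nl}\otimes\mathcal{E}_{p,d}^{\otimes(k-l)}$ (and similarly, with $\mathcal{E}_{1-p,d}$ and $R_d^{c}$, for $\mathcal{N}^c$); and (ii) the superadditivity estimates \eqref{eq:rocket-erasure-supadd} and \eqref{eq:rocket-erasure-comp-supadd}, which give $\mathcal{Q}^{(1)}(R_d^{\otimes n}\otimes\mathcal{E}_{p,d}^{\otimes k})\geq k(1-p)\log d$ whenever $k\leq n$ (since each of the $k$ erasure copies can be paired with one of the $n$ rocket copies), and likewise for the complement with $p\mapsto 1-p$. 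Combined with superadditivity of $\mathcal{Q}^{(1)}$ under tensor products, this yields the lower bounds $\mathcal{Q}^{(k+1)}(\mathcal{N})\geq \tfrac{1}{k+1}k(1-p)\log d = \tfrac{k(1-p)n^\alpha}{k+1}$ and $\mathcal{Q}^{(k+1)}(\mathcal{N}^c)\geq \tfrac{k\,p\,n^\alpha}{k+1}$ for $k\leq n$.

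For the upper bounds on the private informations I would invoke Lemma~\ref{lemma:bounds} (the capacity bounds flagged just before the theorem statement), which should supply, for each block, an estimate of the form $\mathcal{P}^{(1)}(R_d^{\otimes nl}\otimes\mathcal{E}_{p,d}^{\otimes(k-l)})\leq 2nl + (k-l)\,\mathcal{P}^{(1)}(\mathcal{E}_{p,d})$ using $\mathcal{C}(R_d)\leq 2$ together with the additivity of $\mathcal{C}^{(1)}$ over an erasure tensor factor (\cite[Lemma 2]{Elkouss2015private}) and the bound $\mathcal{P}^{(1)}\leq\mathcal{C}^{(1)}$; here $\mathcal{P}^{(1)}(\mathcal{E}_{p,d})=\max\{(1-2p)\log d,0\}=0$ because $p>1/3>1/4$ — actually $p$ only needs to be $\geq 1/2-1/n^{\alpha-1}$... wait, we need $p\geq 1/2$ for the erasure private information to vanish, which does not hold. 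So for $\mathcal{N}$ the relevant erasure factor is $\mathcal{E}_{p,d}$ with $p<1/2$, whose private information is $(1-2p)\log d>0$; this is exactly why the $p<1/2$ case is "tedious" and why Lemma~\ref{lemma:bounds} is needed rather than the crude argument of Theorem~\ref{th:main1}. I expect Lemma~\ref{lemma:bounds} to give a bound like $\mathcal{P}^{(k)}(\mathcal{N})\leq \tfrac{2n}{k}+\tfrac{(k-1)}{k}(1-2p)\log d$ (the $l=1$ block being optimal, as in Theorem~\ref{th:main1}), and correspondingly $\mathcal{P}^{(k)}(\mathcal{N}^c)\leq\tfrac{2n}{k}+\tfrac{(k-1)}{k}(2p-1)\log d$ using $\mathcal{E}_{1-p,d}$ with $1-p>1/2$... no, $1-p\geq 1/2+1/n^{\alpha-1}>1/2$, so $\mathcal{P}^{(1)}(\mathcal{E}_{1-p,d})=0$ and in fact $\mathcal{P}^{(k)}(\mathcal{N}^c)\leq 2n/k$. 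Then $\mathcal{P}^{(k)}_{\max}(\mathcal{N})=\max\{\mathcal{P}^{(k)}(\mathcal{N}),\mathcal{P}^{(k)}(\mathcal{N}^c)\}$ is controlled by the $\mathcal{N}$ term.

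Subtracting, $\mathcal{Q}^{(k+1)}(\mathcal{N})-\mathcal{P}^{(k)}_{\max}(\mathcal{N})\geq \tfrac{k(1-p)n^\alpha}{k+1}-\tfrac{2n}{k}-\tfrac{(k-1)(1-2p)n^\alpha}{k}$; putting over the common denominator $k(k+1)$ and simplifying the $n^\alpha$ coefficient should collapse to $n^\alpha(1-p)-2n(k+1)$ over $k(k+1)$, matching \eqref{eq:thmain1} — I would double-check this algebra carefully, since it is the one place the constants $1/3<p$ and the $k\geq 2$ restriction enter (for $k=1$ the $(k-1)$ term vanishes and the subtraction structure is different, hence the stated range $2\leq k\leq n$). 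For the complementary inequality \eqref{eq:thmain2}, the same computation with $p\leftrightarrow 1-p$ and $\mathcal{P}^{(k)}(\mathcal{N}^c)\leq 2n/k$ gives $\mathcal{Q}^{(k+1)}(\mathcal{N}^c)-\mathcal{P}^{(k)}(\mathcal{N}^c)\geq\tfrac{k\,p\,n^\alpha}{k+1}-\tfrac{2n}{k}=\tfrac{n^\alpha p-2n(k+1)}{k(k+1)}$, valid already for $k=1$. Finally, positivity of both right-hand sides: since $k\leq n$ we have $n^\alpha(1-p)-2n(k+1)\geq n^\alpha(1-p)-2n(n+1)\geq n^\alpha(1-p)-4n^2$, and with $1-p\geq 1/2$ this is $\geq n^2(n^{\alpha-2}/2-4)>0$ precisely when $n^{\alpha-2}>8$; the stronger hypothesis $n^{\alpha-2}>12$ is what is needed to make the second bound positive, where $n^\alpha p-4n^2\geq n^2(n^{\alpha-2}/3-4)>0$ using $p>1/3$. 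The main obstacle is extracting from Lemma~\ref{lemma:bounds} the precise upper bound on $\mathcal{P}^{(k)}(\mathcal{N})$ — in particular verifying that the $l=1$ block dominates the maximum over $l$ for every $k$ in range, which requires comparing $2nl+(k-l)(1-2p)\log d$ across $l$ and using $(1-2p)\log d = (1-2p)n^\alpha \gg 2n$ in the stated parameter regime; once that is in hand, the remainder is the routine arithmetic sketched above.
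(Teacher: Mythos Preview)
Your overall plan---lower-bound $\mathcal{Q}^{(k+1)}$ via Eqs.~\eqref{eq:rocket-erasure-supadd},\eqref{eq:rocket-erasure-comp-supadd} and upper-bound $\mathcal{P}^{(k)}$ via Lemma~\ref{lemma:bounds}, then subtract---is exactly the paper's approach, and your lower bounds $\mathcal{Q}^{(k+1)}(\mathcal{N})\geq \tfrac{k}{k+1}(1-p)n^\alpha$ and $\mathcal{Q}^{(k+1)}(\mathcal{N}^c)\geq \tfrac{k}{k+1}pn^\alpha$ are correct. The gap is in the \emph{form} of the upper bounds you guess for the private informations, and as a result your algebra does not actually produce the stated $f_{n,p,\alpha}$ and $f^c_{n,p,\alpha}$.

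You write $\mathcal{P}^{(1)}(R_d^{\otimes nl}\otimes\mathcal{E}_{p,d}^{\otimes(k-l)})\leq 2nl + (k-l)\,\mathcal{P}^{(1)}(\mathcal{E}_{p,d})$, but there is no such splitting of $\mathcal{P}^{(1)}$ across a tensor product. The only available move is $\mathcal{P}^{(1)}\leq\mathcal{C}^{(1)}$ on the \emph{whole} block, followed by the erasure additivity $\mathcal{C}^{(1)}(\cdot\otimes\mathcal{E}_{p,d})=\mathcal{C}^{(1)}(\cdot)+\mathcal{C}^{(1)}(\mathcal{E}_{p,d})$; this contributes the \emph{classical} capacity $(1-p)\log d$ per erasure factor, not $(1-2p)\log d$. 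Consequently Lemma~\ref{lemma:bounds} gives, for $k>k_0:=(1-p-2/n^{\alpha-1})/p$,
\[
\mathcal{P}^{(k)}(\mathcal{N})\leq \tfrac{2n}{k}+\tfrac{k-1}{k}(1-p)n^\alpha,\qquad
\mathcal{P}^{(k)}(\mathcal{N}^c)\leq \tfrac{2n}{k}+\tfrac{k-1}{k}\,p\,n^\alpha,
\]
and the hypothesis $p>1/3$ is exactly what forces $k_0<2$, so the first bound is valid for every $k\geq 2$ (this is the real source of the restriction $2\leq k\leq n$). With the correct coefficient $(1-p)$ the subtraction telescopes via $\tfrac{k}{k+1}-\tfrac{k-1}{k}=\tfrac{1}{k(k+1)}$ to give precisely $f_{n,p,\alpha}(k)$; with your $(1-2p)$ it does not collapse as you claim. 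The same issue hits the complement: your bound $\mathcal{P}^{(k)}(\mathcal{N}^c)\leq 2n/k$ is unjustified (even though $\mathcal{P}^{(1)}(\mathcal{E}_{1-p,d})=0$, the mixed rocket--erasure blocks still pick up $\mathcal{C}^{(1)}(\mathcal{E}_{1-p,d})=p\log d$ per erasure), and your arithmetic $\tfrac{kpn^\alpha}{k+1}-\tfrac{2n}{k}=\tfrac{pn^\alpha-2n(k+1)}{k(k+1)}$ is off by a factor of $k^2$ in the first numerator term; it is the extra $-\tfrac{k-1}{k}pn^\alpha$ from the correct upper bound that makes the identity come out. Your positivity check at the end is fine once the right bounds are in place.
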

The full proof is located in Appendix~\ref{appen:main-proofs}.
\begin{figure}[H]
 \hspace{-0.4cm}\includegraphics[scale=0.48]{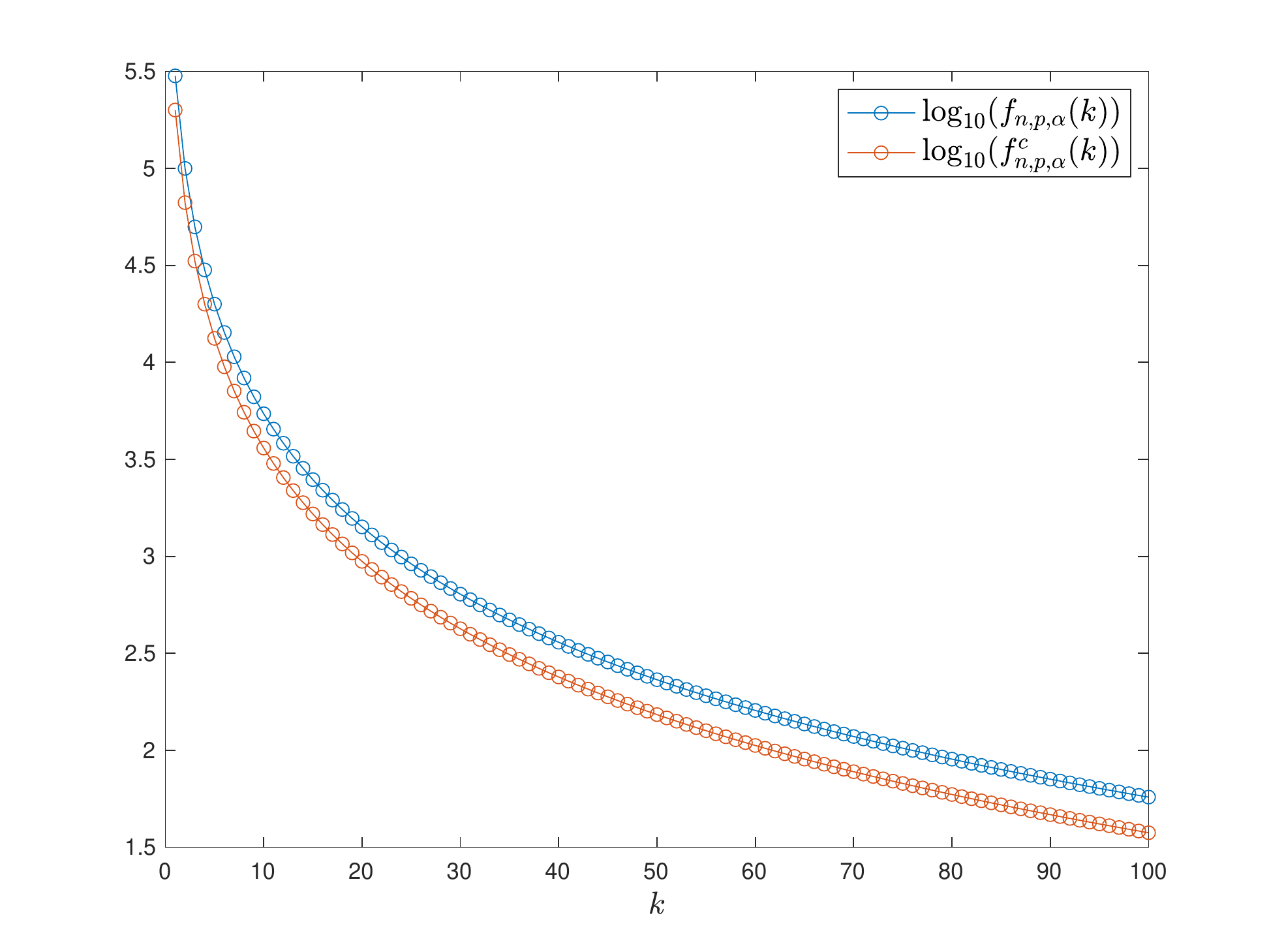}
    \caption{ \centering Plot of the log of the lower bounds $f_{n,p,\alpha}(k)$ and $f^c_{n,p,\alpha}(k)$ on the difference quantities $\mathcal{Q}^{(k+1)}( \mathcal{N}_{n,p,d}) - \mathcal{P}^{(k)}_{\max}( \mathcal{N}_{n,p,d})$ and $\mathcal{Q}^{(k+1)}( \mathcal{N}^c_{n,p,d}) - \mathcal{P}^{(k)}( \mathcal{N}^c_{n,p,d})$, respectively (see Theorem~\ref{th:main2}). Here, $n=100$, $p=0.4$, and $\alpha= 3$. }
    \label{fig:1}
\end{figure}

The log lower bounds for the difference quantities in Theorem~\ref{th:main2} are plotted in Figure~\ref{fig:1}. Note that in the setting of Theorem~\ref{th:main2}, the following is true:
\begin{align}
\frac{k-1}{k} &\geq \frac{2 +n^{\alpha}p}{(1-p)(n+1)n^{\alpha-1}} \nonumber \\ 
&\implies \Pcap{n+1}({\cal N}^c) \leq \Q{k}({\cal N}), \label{eq:supNvNc}
\end{align}
provided that $k\leq n$ (see Theorem~\ref{th:main222}). For instance, when $n=100, p=0.4,$ and $\alpha=3$, the LHS in Eq.~\eqref{eq:supNvNc} holds for $k\geq 3$. In other words, the direct channel in this case is vastly more superadditive than its complement, since only the 3-letter coherent information of ${\cal N}_{n,p,d}$ suffices to beat the 101-letter private information of ${\cal N}^c_{n,p,d}$.

For small values of $p$, the coherent information of ${\cal N}_{n,p,d}$ can even beat the total private information for some uses of the channel. For example, when $p=0.09, n=100,$ and $\alpha=3$, $\Q{101}({\cal N}_{n,p,d})$ beats $\mathcal{P}^{(k)}_{\rm{tot}}({\cal N}_{n,p,d})$ for $k\leq 9$ (see Figure~\ref{fig:2}). A detailed analysis of this phenomenon is given in Theorem~\ref{th:main3}.

\begin{figure}[H]
    \hspace{-0.4cm}\includegraphics[scale=0.48]{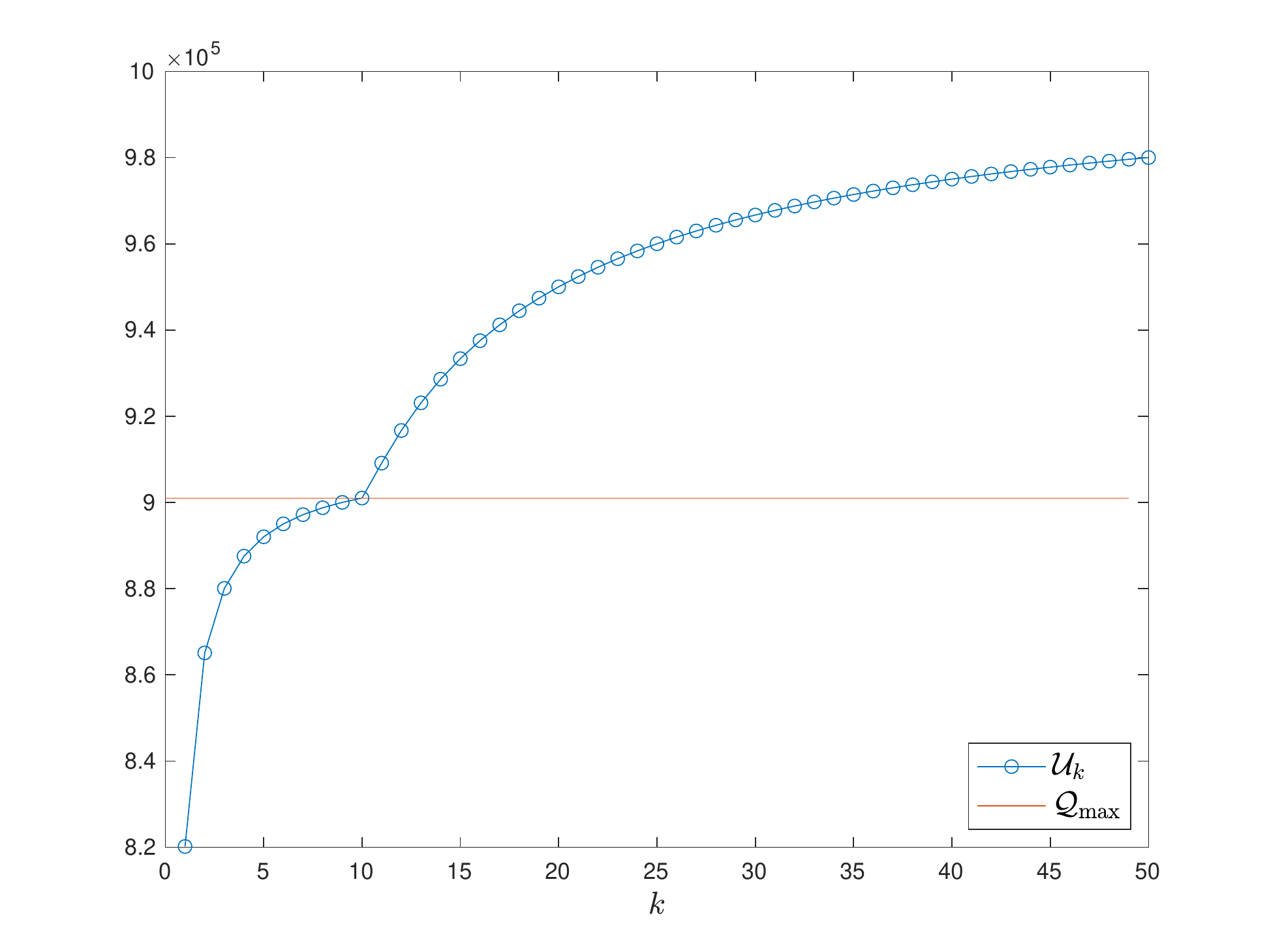}
    \caption{\centering Plot of the lower bound $Q_{max} = L(n+1)$ on  $\mathcal{Q}^{(n+1)}(\mathcal{N}_{n,p,d})$ and the upper bound $\mathcal{U}_k = \begin{cases}
     U'(k) \,\,\quad \text{if } k\leq k_0 \\
     U''(k) \quad \text{otherwise}
    \end{cases}$ on $\mathcal{P}^{(k)}_{\rm{tot}}(\mathcal{N}_{n,p,d})$ for $p=0.09, n=100$, and $\alpha=3$. The lower and upper bound functions are defined in Eqs.~\eqref{eq:Lk},\eqref{eq:U'k},\eqref{eq:U''k}. Clearly, $Q_{max}$ exceeds the total private capacity for at least $9$ uses of the channel.}
    \label{fig:2}
\end{figure}
\noindent
{\it {Platypus construction.--}} We now turn to constructing a channel with superadditive quantum capacity whose complement has additive capacity.
The $d-$dimensional platypus channel ${\cal M}_d: A\to B$ introduced recently in \cite{Siddhu2022platypus} is defined via the isometry: 
\begin{align}
    V:{\cal H}_A &\to {\cal H}_B \otimes {\cal H}_E \\
    V\ket{0} &= \frac{1}{\sqrt{d-1}} \sum_{j=0}^{d-2} \ket{j}\ket{j}, \nonumber \\
    V\ket{i} &= \ket{d-1}\ket{i-1}, \quad i=1,2,\ldots ,d-1, \nonumber
\end{align} 
as ${\cal M}_d(X) = \operatorname{Tr}_E (VXV^{\dagger})$. Here, $A$ and $B$ are $d-$dimensional while $E$ is $(d-1)-$dimensional. The channel and its complement satisfy \cite{Siddhu2022platypus, Siddhu2022platypus2}:
\begin{align}
   {\cal Q}({\cal M}_d) \leq \log \left( 1 + \frac{1}{\sqrt{d-1}} \right), \,\, {\cal P}({\cal M}_d) = {\cal C}({\cal M}_d) = 1,
\end{align}
\begin{align}
   \Q{1}({\cal M}^c_d) = {\cal Q}({\cal M}^c_d) = {\cal P}({\cal M}^c_d) &= {\cal C}({\cal M}^c_d) \nonumber \\
   &= \log (d-1).
\end{align}
As $d\to \infty, {\cal Q}({\cal M}_d)\to 0$. However, when coupled with an erasure channel, ${\cal M}_d$ can be shown to retain some quantum capacity as $d\to \infty$ \cite{Siddhu2022platypus}:
\begin{align}
    \Q{1}({\cal M}_{d+1} \otimes {\cal E}_{1/2,d}) \geq \frac{1}{2} + O\left( \frac{1}{\sqrt{d}} \right).
\end{align}

To turn these effects into superadditivity of a single channel, we again turn to a direct sum construction:
\begin{align}
    {\cal N}_d &:= {\cal M}_{d+1} \oplus {\cal E}_{1/2,d}, \\
    {\cal N}^c_d &:= {\cal M}^c_{d+1} \oplus {\cal E}_{1/2,d}.
\end{align}
By choosing a large enough $d$, we can ensure that $\Q{1}({\cal M}_{d+1} \otimes {\cal E}_{1/2,d}) > 2 \log (1 + 1/\sqrt{d}) \geq \Q{1}({\cal M}^{\otimes 2}_{d+1})$. In other words, ${\cal N}_d$ is superadditive for large enough $d$: 

\begin{align}
    \Q{2}({\cal N}_d) &= \frac{1}{2} \max \{\Q{1}({\cal M}^{\otimes 2}_{d+1}), \Q{1}({\cal M}_{d+1} \otimes {\cal E}_{1/2,d})\} \nonumber  \\ 
    &> \log \left( 1+ \frac{1}{\sqrt{d}} \right) \geq \Q{1}({\cal N}_d).
\end{align}

On the other hand, for any $k_1,k_2\in \mathbb{N}$:
\begin{align}
    \Q{1}({\cal M}^{c \, \otimes k_1}_{d+1} \otimes {\cal E}^{\otimes k_2}_{1/2,d}) &\leq \Ccap{1}({\cal M}^{c \, \otimes k_1}_{d+1} \otimes {\cal E}^{\otimes k_2}_{1/2,d}) \nonumber \\ 
    &= \Ccap{1}({\cal M}^{c \, \otimes k_1}_{d+1}) + \Ccap{1}({\cal E}^{\otimes k_2}_{1/2,d}) \nonumber \\
    &= k_1 \log d + \frac{k_2}{2}\log d \nonumber \\
    &\leq (k_1 + k_2) \log d.
\end{align}
Thus, ${\cal N}^c_d$ has additive quantum capacity:
\begin{align}
   \forall k: \,\,\, \Q{k}({\cal N}^c_d) &= \frac{1}{k} \max_{0\leq l\leq k} \Q{1} ({\cal M}^{c\, \otimes l}_{d+1} \otimes   \mathcal{E}_{1/2,d}^{\otimes k-l}) \nonumber \\
    &= \Q{k} ({\cal M}^c_{d+1}) = \log d.
\end{align}

\noindent
{\it {Discussion.--}}
We have investigated superadditive effects for the coherent and private information of a channel and its complement relative to each other. We showed that contrary to intuitive expectations, the following two cases are both possible:
\begin{itemize}
    \item The direct and complementary channels are simultaneously superadditive,
    \item The direct channel is superadditive while the complement is additive (and vice versa).
\end{itemize}
It is also possible to construct examples where both the direct and complementary channels are entanglement-breaking and hence trivially have additive coherent and private information (equal to zero) \cite{Singh2022bippt}.

One interesting question to investigate further is the extent to which superadditivity of the coherent and private information quantities can be attained simultaneously for the direct and complementary channels. Is there a constraint akin to the monogamy of entanglement which would prohibit a maximum violation of additivity for both channels? Another intriguing question is whether one can superactivate total capacity.  

Finally, it would be enlightening to check whether or not the superadditivity results presented in this letter hold for the classical capacities as well. Our techniques do not apply directly in this case. The problem lies in the fact that the classical capacity of a direct sum channel is not merely the maximum of the classical capacity of its components. This is because when using a direct sum channel, say $\mathcal{N}= \oplus_{i=1}^n \mathcal{N}^{(i)}$, unlike private or coherent information, classical information can not only be sent through the individual blocks $\mathcal{N}^{(i)}$, but can also be encoded in the choice of the blocks $i=1,2,\ldots ,n$. More precisely, for our channel of interest, say ${\cal N} = R_d \oplus {\cal E}_{p,d}$,
Lemma~\ref{lemma:directsum} shows that 
\begin{align*}
    \Ccap{1}(\mathcal{N}) &= \log( 2^{\Ccap{1}(R_d)} + 2^{\Ccap{1}(\mathcal{E}_{p,d})}), \,\, \rm{and} \\
    \Ccap{2}(\mathcal{N}) &=  \frac{1}{2}\log( 2^{\Ccap{1}(R_d^{\otimes 2})} + 2.2^{\Ccap{1}(R_d\otimes \mathcal{E}_{p,d})} + 2^{\Ccap{1}(\mathcal{E}^{\otimes 2}_{p,d})}).
\end{align*}
Since $\Ccap{1}(\mathcal{N}\otimes \mathcal{E}_{p,d}) = \Ccap{1}(\mathcal{N})+\Ccap{1}(\mathcal{E}_{p,d})$ for all channels $\mathcal{N}$ \cite[Lemma 2]{Elkouss2015private}, it is clear that the question of superadditivity of  $\Ccap{1}(\mathcal{N})$ boils down to the question of superadditivity of $\Ccap{1}(R_d)$, which is currently open. 

{\it Acknowledgements.}
The authors would like to thank Nilanjana Datta for helpful discussions. 
Satvik Singh acknowledges support from the Cambridge Trust's International Scholarship.
Sergii Strelchuk acknowledges support from the Royal Society University Research Fellowship.  
\bigskip

\onecolumngrid

\PRLsep

\bibliography{references}


\appendix

\counterwithin{theorem}{section}

\section{Notation and the direct sum construction}

We denote quantum systems with capital letters like $A$ and $B$. Each quantum system $A$ is associated with a finite-dimensional complex Hilbert space $\mathcal{H}_A$. For a joint system $AB$ or $A\oplus B$, $\mathcal{H}_{AB}=\mathcal{H}_A\otimes \mathcal{H}_B$ or ${\cal H}_{A\oplus B} = {\cal H}_A \oplus {\cal H}_B$. Quantum states $\rho_A$ on $A$ are positive semi-definite operators acting on $\mathcal{H}_A$ with unit trace. The von Neumann entropy of $\rho_A$ is defined as $S(A) := -\operatorname{Tr}(\rho_A \log_2 \rho_A)$. For a bipartite state $\rho_{AB}$,the mutual information is defined as $I(A:B):= S(A)+S(B)-S(AB)$.

A quantum channel $\mathcal{N}:A\to B$ (denoted $\mathcal{N}_{A\to B}$) is a completely positive and trace-preserving linear map taking linear operators acting on $\mathcal{H}_A$ to those acting on $\mathcal{H}_B$. Every channel ${\cal N}_{A\to B}$ admits a Stinespring isometry $V:{\cal H}_A \to {\cal H}_B \otimes {\cal H}_E$ such that $\mathcal{N}(X)=\operatorname{Tr}_E(VX V^\dagger)$. The complementary channel ${\cal N}^c_{A\to E}$ is defined as $\mathcal{N}^c (X)  = \operatorname{Tr}_B(VXV^\dagger)$. For two channels ${\cal N}^{(1)}:A_1\to B_1, {\cal N}^{(2)}:A_2\to B_2$, the direct sum ${\cal N}= {\cal N}^{(1)} \oplus {\cal N}^{(2)} : A_1\oplus A_2 \to B_1 \oplus B_2$ acts as 
\begin{align}
    {\cal N}\left(
    \begin{bmatrix} X_{A_1} & * \\ * & X_{A_2} \end{bmatrix} \right) = \begin{bmatrix} {\cal N}^{(1)}(X_{A_1}) & 0 \\ 0 & {\cal N}^{(2)}(X_{A_2}) \end{bmatrix}.
\end{align}
It should be clear that the complement ${\cal N}_c = {\cal N}^{(1)}_c \oplus {\cal N}^{(2)}_c$.

\begin{lemma}\label{lemma:directsum}
For quantum channels $\mathcal{N}^{(i)}_{A_i \to B_i}$ for $i=1,2,\ldots n$, let $\mathcal{N} := \oplus_{i=1}^n \mathcal{N}^{(i)}$. Then,
\begin{align}
    \text{\cite[\rm{\, Proposition \, 1}]{Fukuda2007additivity}} \qquad \mathcal{Q}^{(1)}(\mathcal{N}) &= \max_{1\leq i\leq n} \mathcal{Q}^{(1)}(\mathcal{N}^{(i)}), \\
    \text{\cite[\rm{\, Lemma \, 1}]{Elkouss2015private}} \qquad \mathcal{P}^{(1)}(\mathcal{N}) &= \max_{1\leq i\leq n} \mathcal{P}^{(1)}(\mathcal{N}^{(i)}), \\
    \text{\cite[\rm{\, Proposition \, 1}]{Fukuda2007additivity}} \qquad \Ccap{1}(\mathcal{N}) &= \log \sum_{i=1}^n 2^{ \Ccap{1}(\mathcal{N}^{(i)})}.
\end{align}
\end{lemma}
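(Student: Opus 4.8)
The plan is to exploit the defining property of the direct sum: writing an input operator $X$ in block form with respect to $\mathcal{H}_A = \oplus_i \mathcal{H}_{A_i}$, the channel $\mathcal{N} = \oplus_i \mathcal{N}^{(i)}$ annihilates every off-diagonal block and acts as $\mathcal{N}^{(i)}$ on the $i$-th diagonal block. Equivalently, feeding a state $\rho_A$ into $\mathcal{N}$ is the same as first performing the \emph{which-block} measurement $\{\Pi_i\}$ (with $\Pi_i$ the orthogonal projector onto $\mathcal{H}_{A_i}$), recording the outcome $i$ with probability $p_i = \operatorname{Tr}(\Pi_i \rho_A)$ in a classical register $I$, and then sending the post-measurement state $\rho_A^{(i)} = \Pi_i \rho_A \Pi_i / p_i$ through $\mathcal{N}^{(i)}$. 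The key structural fact, visible from the Stinespring dilation $V = \oplus_i V^{(i)}$ of $\mathcal{N}$, is that $I$ is classical and is recoverable \emph{both} from Bob's output $B$ (whose state lies in the orthogonal subspace $\mathcal{H}_{B_i}$ conditioned on $I=i$) and from Eve's output $E$ (which carries a copy of the block flag). Hence for every input, $\mathcal{N}(\rho_A)$ and $\mathcal{N}^c(\rho_A)$ are block-diagonal with respect to $I$, with blocks $\mathcal{N}^{(i)}(\rho_A^{(i)})$ and $\mathcal{N}^{(i)c}(\rho_A^{(i)})$.

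For the coherent information I would fix any $\rho_A$, decompose as above, and use additivity of von Neumann entropy across the block structure to get $S(B) = H(\mathbf{p}) + \sum_i p_i S(B_i)$ and $S(E) = H(\mathbf{p}) + \sum_i p_i S(E_i)$, with $H(\mathbf{p}) = -\sum_i p_i \log p_i$ and $S(B_i), S(E_i)$ evaluated on $\mathcal{N}^{(i)}(\rho_A^{(i)}), \mathcal{N}^{(i)c}(\rho_A^{(i)})$. The flag entropy cancels, so $S(B) - S(E) = \sum_i p_i [S(B_i) - S(E_i)] \leq \sum_i p_i \mathcal{Q}^{(1)}(\mathcal{N}^{(i)}) \leq \max_i \mathcal{Q}^{(1)}(\mathcal{N}^{(i)})$; the matching lower bound is immediate from inputs supported inside a single block. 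The private-information identity is proved the same way after adjoining $I$ (recoverable from $B$ and from $E$) to the input register $X$: the chain rule gives $I(X:B) - I(X:E) = I(X:B|I) - I(X:E|I) = \sum_i p_i [I(X:B_i) - I(X:E_i)]$, each conditional term being at most $\mathcal{P}^{(1)}(\mathcal{N}^{(i)})$ since the ensemble conditioned on $I=i$ is a legitimate classical--quantum input for $\mathcal{N}^{(i)}$; single-block inputs again give the reverse inequality.

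For the Holevo quantity the flag entropy no longer cancels, since there is no $E$ term to subtract it against --- this is the effect of encoding information in the choice of block. Using that $I$ is a deterministic function of $B$, $I(X:B) = I(X:IB) = I(X:I) + I(X:B|I) \leq H(\mathbf{p}) + \sum_i p_i \mathcal{C}^{(1)}(\mathcal{N}^{(i)})$. Writing $c_i := \mathcal{C}^{(1)}(\mathcal{N}^{(i)})$ and $Z := \sum_i 2^{c_i}$, the right-hand side equals $\log Z - \sum_i p_i \log(p_i Z / 2^{c_i})$, which is at most $\log Z$ by Gibbs' inequality, with equality at $p_i = 2^{c_i}/Z$. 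Achievability comes from the ensemble that picks block $i$ with this probability and then uses a (near-)optimal Holevo ensemble for $\mathcal{N}^{(i)}$ inside block $i$, with classical label $(i,x)$: since $B$ pins down $i$, the terms $I(I:B) = H(\mathbf{p})$ and $I(X:B|I) = \sum_i p_i c_i$ add up to $\log Z$, so $\mathcal{C}^{(1)}(\mathcal{N}) \geq \log \sum_i 2^{c_i}$.

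The points needing most care --- and where I expect the real work to be --- are (i) justifying that the block flag is recoverable from the environment, which means writing down the Stinespring dilation of $\oplus_i \mathcal{N}^{(i)}$ explicitly and checking that tracing out $B$ (resp.\ $E$) destroys all cross-block terms, and (ii) the routine limiting argument when the suprema defining $\mathcal{C}^{(1)}(\mathcal{N}^{(i)})$ or $\mathcal{Q}^{(1)}(\mathcal{N}^{(i)})$ are not attained, so that one argues with $\varepsilon$-optimal ensembles and sends $\varepsilon \to 0$. Both are standard and are carried out in \cite{Fukuda2007additivity, Elkouss2015private}; the rest is convexity and the Gibbs variational identity.
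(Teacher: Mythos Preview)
Your proposal is correct: the block-flag argument, the cancellation of $H(\mathbf{p})$ in $S(B)-S(E)$ and in $I(X:B)-I(X:E)$, and the Gibbs variational identity for the Holevo case are exactly the standard route. The paper itself does not prove this lemma at all---it merely records the three identities with citations to \cite{Fukuda2007additivity} and \cite{Elkouss2015private}---so there is nothing further to compare; your sketch is essentially the argument those references give.
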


\section{Proofs} \label{appen:main-proofs}

Recall from the main text that for $n,d\in \mathbb{N}$ and $p\in [0,1]$, the channel ${\cal N}_{n,p,d}$ was defined as the direct sum ${\cal N}_{n,p,d}:= R^{\otimes n}_{d} \oplus \mathcal{E}_{p,d}$, where $R_d$ and ${\cal E}_{p,d}$ are the rocket and erasure channels. We can derive the following bounds on the capacities of this channel.

\begin{lemma} \label{lemma:bounds}
Let $p\in [0,1]$ and $n,\alpha\in \mathbb{N} \, (\alpha > 1)$ be such that $4/n^{\alpha -1} \leq p \leq 1/2 - 1/n^{\alpha-1}$. Let $d = 2^{n^\alpha}$ and 
\begin{equation*}
    k_0 (n,p,\alpha) := \frac{1-p-2/n^{\alpha -1}}{p}.
\end{equation*}
Then, the following bounds hold:
\begin{alignat}{2}
    \forall k\leq k_0 &: \quad \mathcal{P}^{(k)}( \mathcal{N}_{n,p,d} )  \leq (1-2p)n^{\alpha}\label{eq:bound1} \\
    \forall k> k_0 &: \quad \mathcal{P}^{(k)}( \mathcal{N}_{n,p,d} ) \leq \frac{2n}{k} + \frac{k-1}{k} (1-p)n^{\alpha} \label{eq:bound2} \\
    \forall k &: \quad \mathcal{P}^{(k)}( \mathcal{N}_{n,p,d}^c)  \leq \frac{2n}{k} + \frac{k-1}{k} pn^{\alpha} \label{eq:bound3} \\
    \forall k\leq n &: \quad \mathcal{Q}^{(k+1)}( \mathcal{N}_{n,p,d} ) \geq \frac{k}{k+1}(1-p)n^{\alpha} \label{eq:bound4} \\
    \forall k\leq n &: \quad \mathcal{Q}^{(k+1)}( \mathcal{N}^c_{n,p,d} ) \geq \frac{k}{k+1}pn^{\alpha} \label{eq:bound5}
\end{alignat}
\end{lemma}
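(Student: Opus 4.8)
The plan is to derive each bound by combining the direct-sum formulas of Lemma~\ref{lemma:directsum} with the additivity of Holevo information under tensoring with an erasure channel (\cite[Lemma 2]{Elkouss2015private}, which says $\Ccap{1}(\mathcal{M}\otimes\mathcal{E}_{p,d}) = \Ccap{1}(\mathcal{M})+\Ccap{1}(\mathcal{E}_{p,d})$) and the known single-letter capacity data for the building blocks: $\Ccap{1}(R_d)\le 2$, $\Ccap{1}(R_d^c)\le 2$, $\Ccap{1}(\mathcal{E}_{p,d})=(1-p)\log d$, and the superadditivity estimates in Eqs.~\eqref{eq:rocket-erasure-supadd} and \eqref{eq:rocket-erasure-comp-supadd}. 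Since $\mathcal{N}_{n,p,d}=R_d^{\otimes n}\oplus\mathcal{E}_{p,d}$, the $m$-fold tensor power decomposes (after reordering summands) into blocks $R_d^{\otimes nl}\otimes\mathcal{E}_{p,d}^{\otimes m-l}$ for $0\le l\le m$, so by Lemma~\ref{lemma:directsum} the private information $\mathcal{P}^{(1)}$ of the tensor power is the maximum over $l$ of $\mathcal{P}^{(1)}$ of these blocks, and similarly for $\mathcal{Q}^{(1)}$.

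For the upper bounds \eqref{eq:bound1}--\eqref{eq:bound3}, I would bound $\mathcal{P}^{(1)}\le\Ccap{1}$ on each block, use the erasure-additivity lemma to write $\Ccap{1}(R_d^{\otimes nl}\otimes\mathcal{E}_{p,d}^{\otimes k-l})\le 2nl+(k-l)(1-p)\log d$ (and the analogue with $1-p$ for the complement, using $\mathcal{E}_{p,d}^c=\mathcal{E}_{1-p,d}$), divide by $k$, and maximize the resulting affine function of $l$ over $0\le l\le k$. The coefficient of $l$ is $2n-(1-p)\log d = 2n-(1-p)n^\alpha = n(2-(1-p)n^{\alpha-1})$, which is negative precisely when $(1-p)n^{\alpha-1}>2$, i.e.\ $p<1-2/n^{\alpha-1}$; this holds throughout the stated range. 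So the per-letter quantity is decreasing in $l$ for $l$ up to $k$... but one must be careful: for $\mathcal{P}^{(k)}(\mathcal{N})$ the value at $l=k$ is $2n$, and one also has the trivial lower competitor $0$; the threshold $k_0$ is exactly where $2n/k+\tfrac{k-1}{k}(1-p)n^\alpha$ crosses $(1-2p)n^\alpha$, which after rearranging gives $k_0=(1-p-2/n^{\alpha-1})/p$. So for $k\le k_0$ the maximum over $l$ is achieved at an interior/small $l$ giving the cleaner bound $(1-2p)n^\alpha$ (this is the pure-erasure value $\mathcal{P}(\mathcal{E}_{p,d})=(1-2p)\log d$), while for $k>k_0$ the $l=1$ term $2n/k+\tfrac{k-1}{k}(1-p)n^\alpha$ wins; I would verify the crossover algebraically. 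For the complement \eqref{eq:bound3}, since $\mathcal{E}_{1-p,d}$ has $1-p>1/2$, its private capacity is zero, so the $l=0$ term contributes nothing and the bound $2n/k+\tfrac{k-1}{k}pn^\alpha$ (from $l=1$, or equivalently dominating term) holds for all $k$ with no case split.

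For the lower bounds \eqref{eq:bound4}--\eqref{eq:bound5}, I would simply restrict the maximum over $l$ to the single choice $l=1$ in $\mathcal{Q}^{(1)}(\mathcal{N}^{\otimes k+1})$: this gives $\mathcal{Q}^{(k+1)}(\mathcal{N})\ge\frac{1}{k+1}\mathcal{Q}^{(1)}(R_d^{\otimes n}\otimes\mathcal{E}_{p,d}^{\otimes k})$, and then lower-bound the latter by pairing $k$ of the $n$ rocket copies with the $k$ erasure copies (valid since $k\le n$) and discarding the rest, using Eq.~\eqref{eq:rocket-erasure-supadd} on each pair together with superadditivity $\mathcal{Q}^{(1)}(\mathcal{M}_1\otimes\mathcal{M}_2)\ge\mathcal{Q}^{(1)}(\mathcal{M}_1)+\mathcal{Q}^{(1)}(\mathcal{M}_2)$, yielding $k(1-p)\log d=k(1-p)n^\alpha$; dividing by $k+1$ gives \eqref{eq:bound4}, and the same argument with Eq.~\eqref{eq:rocket-erasure-comp-supadd} gives \eqref{eq:bound5} with $p$ in place of $1-p$. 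The main obstacle is purely bookkeeping in the upper-bound case analysis: correctly identifying that the maximizing $l$ jumps as $k$ crosses $k_0$, and checking that the stated hypotheses $4/n^{\alpha-1}\le p\le 1/2-1/n^{\alpha-1}$ are exactly what is needed to keep $k_0\ge 1$ (so the first regime is nonempty), to keep the coefficient of $l$ negative, and to ensure the competing trivial bounds ($0$ for $\mathcal{Q}$, the pure-erasure values for $\mathcal{P}$) never dominate unexpectedly; no deep idea is required beyond what is already in the proof of Theorem~\ref{th:main1}.
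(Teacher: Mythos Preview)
Your proposal is correct and follows essentially the same route as the paper's proof: decompose via Lemma~\ref{lemma:directsum} into blocks $R_d^{\otimes nl}\otimes\mathcal{E}_{p,d}^{\otimes k-l}$, bound $\mathcal{P}^{(1)}\le\Ccap{1}$ and peel off erasure factors additively to get the three candidates $2n$, $\tfrac{2n}{k}+\tfrac{k-1}{k}(1-p)n^\alpha$, and $(1-2p)n^\alpha$ (the last from $l=0$, not an ``interior'' $l$ as you wrote, though you correctly identify it as the pure-erasure private capacity), then check algebraically that the crossover between the latter two occurs exactly at $k_0$; and for the lower bounds pick the $l=1$ block and invoke Eqs.~\eqref{eq:rocket-erasure-supadd}--\eqref{eq:rocket-erasure-comp-supadd} together with superadditivity of $\mathcal{Q}^{(1)}$. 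The paper's proof does precisely this, with the same use of the hypotheses $4/n^{\alpha-1}\le p\le 1/2-1/n^{\alpha-1}$ to dispose of the $2n$ candidate in each case.
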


\begin{proof}
By using Lemma~\ref{lemma:directsum} and the fact that $\Ccap{1} ({\cal N}\otimes {\cal E}_{p,d}) = \Ccap{1}({\cal N})+ \Ccap{1} ({\cal E}_{p,d})$ for all channels $\cal N$, it is easy to arrive at the following bounds: 
\begin{align}
    \mathcal{P}^{(k)}( \mathcal{N}_{n,p,d}) &= \frac{1}{k}\max_{0\leq l\leq k} \mathcal{P}^{(1)}(R_d^{\otimes nl} \otimes \mathcal{E}_{p,d}^{\otimes k-l}) \nonumber \\ 
    &\leq \max \begin{cases}
        2n \\
        \frac{2n}{k} + \frac{k-1}{k} (1-p) n^{\alpha} \\
        (1-2p) n^{\alpha},
    \end{cases} \label{eq:P1k}  \\
    \mathcal{P}^{(k)}( \mathcal{N}_{n,p,d}^c) &= \frac{1}{k} \max_{0\leq l\leq k} \mathcal{P}^{(1)}(R_d^{c \,\otimes nl} \otimes \mathcal{E}_{1-p,d}^{\otimes k-l}) \nonumber \\ 
    &\leq \max 
    \begin{cases}
     2n \\ 
     \frac{2n}{k} + \frac{k-1}{k} p n^{\alpha}.
    \end{cases}
\end{align}

Let's first deal with the maximum in Eq.~\eqref{eq:P1k}. Clearly, $(1-2p)n^{\alpha} \geq 2n$ since  $p \leq 1/2 - 1/n^{\alpha -1}$. Moreover, 
\begin{align*}
    &\frac{2n}{k} + \frac{k-1}{k} (1-p)n^{\alpha} \leq  (1-2p)n^{\alpha} \\
    &\iff \frac{2}{k} + (1-\frac{1}{k}) (1-p)n^{\alpha -1} \leq (1-2p)n^{\alpha - 1} \\ 
    & \iff \frac{1}{k}((1-p)n^{\alpha - 1} - 2) \geq n^{\alpha - 1}p \\ 
    & \iff k \leq \frac{(1-p - 2/n^{\alpha -1})}{p} = k_0.
\end{align*}
This gives us the first two bounds in Eqs. \eqref{eq:bound1} and \eqref{eq:bound2}. To obtain the bound in Eq.~\eqref{eq:bound3}, observe that
\begin{equation}
    4/n^{\alpha -1} \leq p \iff 2n \leq \frac{1}{2} pn^{\alpha}.
\end{equation}
Hence, for $k\geq 2$, we have $ \frac{k-1}{k} pn^{\alpha} \geq pn^{\alpha}/2 \geq 2n$. To prove Eq.~\eqref{eq:bound4}, note that for $k\leq n$, Lemma~\ref{lemma:directsum} shows
\begin{align*}
    \mathcal{Q}^{(k+1)}( \mathcal{N}_{n,p,d}) \geq \frac{\mathcal{Q}^{(1)}( R_d^{\otimes n} \otimes \mathcal{E}_{p,d}^{\otimes k})}{k+1} \geq \frac{k}{k+1} (1-p)n^{\alpha},
\end{align*}
where we have used the bound in Eq.~\eqref{eq:rocket-erasure-supadd} along with the fact that $\mathcal{Q}^{(1)}(\mathcal{N}_1\otimes \mathcal{N}_2)\geq \mathcal{Q}^{(1)}(\mathcal{N}_1)+\mathcal{Q}^{(1)}(\mathcal{N}_2)$. An identical argument works to prove Eq.~\eqref{eq:bound5}.
\end{proof}

We can now prove Theorem~\ref{th:main2} from the main text.

\begin{theorem}\label{th:main22}
Let $p\in [0,1]$, and $n,\alpha\in \mathbb{N}$ be such that $1/3 < p \leq 1/2 - 1/n^{\alpha-1}$ and $n^{\alpha -2}>12$. Furthermore, let $d=2^{n^\alpha}$ and ${\cal N}= {\cal N}_{n,p,d}$. Then, 
\begin{align}
    \mathcal{Q}^{(k+1)}( {\cal N}) - \mathcal{P}^{(k)}_{\max}( {\cal N}) &\geq \frac{n^{\alpha}(1-p) - 2n(k+1)}{k(k+1)} \nonumber  \\ 
    &=: f_{n,p,\alpha}(k) > 0, \label{eq:thmain11} \\
    \mathcal{Q}^{(k+1)}({\cal N}^c) - \mathcal{P}^{(k)}({\cal N}^c) &\geq \frac{n^{\alpha }p - 2n(k+1)}{k(k+1)} \nonumber \\
    &=: f^c_{n,p,a}(k) > 0, \label{eq:thmain22} 
\end{align}
where the first bound holds for $2\leq k\leq n$ and the second bound holds for $1\leq k\leq n$.  $\Q{2}({\cal N})>\Pcap{1}_{\max}({\cal N})$ 
\end{theorem}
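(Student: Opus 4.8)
The plan is to derive Theorem~\ref{th:main22} as an essentially immediate consequence of the capacity bounds collected in Lemma~\ref{lemma:bounds}. First I would fix the hypotheses $1/3 < p \leq 1/2 - 1/n^{\alpha-1}$ and $n^{\alpha-2} > 12$, and note that these imply the weaker condition $4/n^{\alpha-1} \leq p \leq 1/2 - 1/n^{\alpha-1}$ required by Lemma~\ref{lemma:bounds} (since $p > 1/3$ and $n^{\alpha-1} \geq 12 n \geq 12$, certainly $p > 4/n^{\alpha-1}$). I would also record the key numerical consequence of $p > 1/3$: namely $k_0(n,p,\alpha) = (1-p-2/n^{\alpha-1})/p < (1-p)/p < 2$, so that $k_0 < 2$ and hence \emph{every} $k \geq 2$ falls into the regime $k > k_0$ where the bound in Eq.~\eqref{eq:bound2} applies.

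Next I would handle the direct channel inequality~\eqref{eq:thmain11}. For $2 \leq k \leq n$, combine the lower bound $\mathcal{Q}^{(k+1)}(\mathcal{N}) \geq \frac{k}{k+1}(1-p)n^\alpha$ from Eq.~\eqref{eq:bound4} with the upper bound on $\mathcal{P}^{(k)}(\mathcal{N})$ from Eq.~\eqref{eq:bound2} (valid since $k > k_0$ as just argued); for $\mathcal{P}^{(k)}(\mathcal{N}^c)$ use Eq.~\eqref{eq:bound3}. One checks $(1-p)n^\alpha \geq p n^\alpha$ (as $p \leq 1/2$), so the $\mathcal{P}^{(k)}(\mathcal{N})$ bound dominates and $\mathcal{P}^{(k)}_{\max}(\mathcal{N}) \leq \frac{2n}{k} + \frac{k-1}{k}(1-p)n^\alpha$. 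Subtracting:
\begin{align*}
\mathcal{Q}^{(k+1)}(\mathcal{N}) - \mathcal{P}^{(k)}_{\max}(\mathcal{N}) &\geq \frac{k}{k+1}(1-p)n^\alpha - \frac{2n}{k} - \frac{k-1}{k}(1-p)n^\alpha \\
&= (1-p)n^\alpha\left(\frac{k}{k+1} - \frac{k-1}{k}\right) - \frac{2n}{k} = \frac{(1-p)n^\alpha}{k(k+1)} - \frac{2n}{k} = f_{n,p,\alpha}(k).
\end{align*}
Positivity of $f_{n,p,\alpha}(k)$ follows because $k+1 \leq n+1 \leq 2n$ and $(1-p)n^\alpha \geq \frac{1}{2}n^\alpha \geq \frac{1}{2}\cdot 12 n^2 \cdot n^{\alpha-2}/n^{\alpha-2} > 2n \cdot 2n \geq 2n(k+1)$ after unpacking $n^{\alpha-2} > 12$; the cleanest route is $(1-p)n^\alpha > \tfrac12 n^\alpha$ and $2n(k+1) \leq 2n \cdot 2n = 4n^2 \leq \tfrac{4}{12}n^\alpha < \tfrac12 n^\alpha$. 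The complementary inequality~\eqref{eq:thmain22} is entirely parallel: for $1 \leq k \leq n$, subtract Eq.~\eqref{eq:bound3} from Eq.~\eqref{eq:bound5} to get $f^c_{n,p,\alpha}(k) = \frac{p n^\alpha}{k(k+1)} - \frac{2n}{k}$, with positivity from $p n^\alpha > \tfrac13 n^\alpha > 4n^2 \geq 2n(k+1)$ again using $n^{\alpha-2} > 12$.

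Finally, for the appended claim $\mathcal{Q}^{(2)}(\mathcal{N}) > \mathcal{P}^{(1)}_{\max}(\mathcal{N})$, the case $k=1$ of~\eqref{eq:thmain11} is \emph{not} covered by the above argument, because $k=1 \leq k_0 < 2$ means one must instead use the single-letter bound Eq.~\eqref{eq:bound1}, $\mathcal{P}^{(1)}(\mathcal{N}) \leq (1-2p)n^\alpha$, together with $\mathcal{P}^{(1)}(\mathcal{N}^c) \leq 2n$ from Eq.~\eqref{eq:bound3}'s derivation (the $l=n$ branch giving at most $2n$) — note $(1-2p)n^\alpha \geq 2n$ by hypothesis — so $\mathcal{P}^{(1)}_{\max}(\mathcal{N}) \leq (1-2p)n^\alpha$. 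Against this, Eq.~\eqref{eq:bound4} with $k=1$ gives $\mathcal{Q}^{(2)}(\mathcal{N}) \geq \tfrac12(1-p)n^\alpha$, and $\tfrac12(1-p)n^\alpha > (1-2p)n^\alpha \iff \tfrac12(1-p) > 1-2p \iff 3p > 1$, which holds since $p > 1/3$. This is the one place the strict inequality $p > 1/3$ (rather than $p \geq 1/3$) is essential, and it is the only genuinely separate computation. The main obstacle — really a bookkeeping subtlety rather than a deep difficulty — is keeping straight which regime of $k$ invokes which of the five bounds in Lemma~\ref{lemma:bounds}, in particular recognizing that the hypothesis $p > 1/3$ forces $k_0 < 2$ so that the $k \geq 2$ analysis uses Eq.~\eqref{eq:bound2} uniformly while the $k=1$ corner needs Eq.~\eqref{eq:bound1} separately.
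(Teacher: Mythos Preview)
Your proposal is correct and follows the same route as the paper's proof: invoke the bounds of Lemma~\ref{lemma:bounds}, subtract the upper bound on $\mathcal{P}^{(k)}_{\max}$ (resp.\ $\mathcal{P}^{(k)}(\mathcal{N}^c)$) from the lower bound on $\mathcal{Q}^{(k+1)}$, and verify positivity from $n^{\alpha-2}>12$ together with $1-p\geq 1/2$ (resp.\ $p>1/3$). Your write-up is in fact slightly more complete than the paper's: you make explicit the observation that $p>1/3$ forces $k_0<2$, so that Eq.~\eqref{eq:bound2} applies for all $k\geq 2$ (the paper just asserts the resulting bound on $\mathcal{P}^{(k)}_{\max}$ for $2\leq k\leq n$ without spelling this out), and you actually address the appended claim $\mathcal{Q}^{(2)}(\mathcal{N})>\mathcal{P}^{(1)}_{\max}(\mathcal{N})$, which the paper's proof omits entirely. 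One tiny point you leave implicit: when you write ``$k=1\leq k_0<2$'' you are using $k_0\geq 1$, which follows from the upper hypothesis $p\leq 1/2-1/n^{\alpha-1}$ (equivalently $1-p-2/n^{\alpha-1}\geq p$); this is worth stating for completeness.
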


\begin{proof}
For $1/3 < p \leq 1/2 - 1/n^{\alpha -1}$, Lemma~\ref{lemma:bounds} shows that 
\begin{equation} \label{eq:Uk}
    \mathcal{P}^{(k)}_{\max}( \mathcal{N}_{n,p,d})\leq \frac{2n}{k} + \frac{k-1}{k} (1-p)n^{\alpha} =: U_{n,p,\alpha}(k)
\end{equation}
for $2\leq k\leq n$. Furthermore, for $k\leq n$, we have
\begin{equation}\label{eq:Lk}
    \mathcal{Q}^{(k+1)}( \mathcal{N}_{n,p,d}) \geq \frac{k}{k+1}(1-p)n^{\alpha} =: L_{n,p,\alpha}(k+1) .
\end{equation}
Now,
\begin{align*}
    L(k+1) - U(k) &= \frac{k}{k+1} (1-p)n^{\alpha} - \frac{2n}{k} - \frac{k-1}{k} (1-p)n^{\alpha} \\
    &= \frac{n^{\alpha}(1-p) - 2n(k+1)}{k(k+1)} > 0,
\end{align*}
where the final inequality holds since $n^{\alpha -2}>12>8$, $k\leq n$, and $1-p\geq 1/2$:
\begin{equation*}
    2(k+1) \leq 2(n+1) \leq 4n < \frac{n^{\alpha -1}}{2} \leq n^{\alpha -1} (1-p).
\end{equation*}
This establishes Eq.~\eqref{eq:thmain11}. To prove Eq.~\eqref{eq:thmain22}, we again use Lemma~\ref{lemma:bounds} to obtain (for $k\leq n):$
\begin{align}
    \mathcal{P}^{(k)}( \mathcal{N}^c_{n,p,d}) &\leq \frac{2n}{k} + \frac{k-1}{k} pn^{\alpha} =: U^c_{n,p,\alpha}(k), \label{eq:Ukc} \\
    \mathcal{Q}^{(k+1)}( \mathcal{N}^c_{n,p,d}) &\geq \frac{k}{k+1}pn^{\alpha} =: L^c_{n,p,\alpha}(k+1).  \label{eq:Lkc}
\end{align}
As before, the claim follows by noting that 
\begin{align*}
    L^c(k+1) - U^c(k) &= \frac{k}{k+1} pn^{\alpha} - \frac{2n}{k} - \frac{k-1}{k} pn^{\alpha} \\
    &= \frac{n^{\alpha}p - 2n(k+1)}{k(k+1)} > 0,
\end{align*}
where the final inequality is true because $n^{\alpha -2}>12$, $k\leq n$, and $p> 1/3$:
\begin{equation*}
    2(k+1) \leq 2(n+1) \leq 4n < \frac{n^{\alpha -1}}{3} < n^{\alpha -1} p.
\end{equation*}
\end{proof}

\begin{theorem}   \label{th:main222} 
In the setting of Theorem~\ref{th:main2}, for $k\leq n$, the following implication holds:
\begin{align}
\frac{k-1}{k} &\geq \frac{2 +n^{\alpha}p}{(1-p)(n+1)n^{\alpha-1}} \nonumber \\ 
&\implies \Pcap{n+1}({\cal N}^c) \leq \Q{k}({\cal N}). 
\end{align}
\end{theorem}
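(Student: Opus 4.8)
The plan is to derive the claim by directly chaining two of the estimates established in Lemma~\ref{lemma:bounds}: the upper bound \eqref{eq:bound3} on $\Pcap{k}(\mathcal{N}^c)$, evaluated at $n+1$ channel uses, against the lower bound \eqref{eq:bound4} on $\Q{k+1}(\mathcal{N})$, evaluated so as to control $\Q{k}(\mathcal{N})$. First I would check that the standing hypotheses of Theorem~\ref{th:main2} ($1/3 < p \leq 1/2 - 1/n^{\alpha-1}$ and $n^{\alpha-2}>12$) actually imply those of Lemma~\ref{lemma:bounds} ($4/n^{\alpha-1}\leq p \leq 1/2-1/n^{\alpha-1}$ and $\alpha>1$): since $n,\alpha\in\mathbb{N}$ and $n^{\alpha-2}>12$ force $n\geq 2$ and $\alpha\geq 3$, we get $n^{\alpha-1}=n\,n^{\alpha-2}> 24$, hence $4/n^{\alpha-1}<1/6<1/3<p$. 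So the bounds \eqref{eq:bound3} and \eqref{eq:bound4} are available.

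Next, I would apply \eqref{eq:bound3} with the number of channel uses set equal to $n+1$, giving
\[
\Pcap{n+1}(\mathcal{N}^c) \;\leq\; \frac{2n}{n+1} + \frac{n}{n+1}\,p\,n^{\alpha} \;=\; \frac{n}{n+1}\bigl(2 + p\,n^{\alpha}\bigr),
\]
and separately apply \eqref{eq:bound4} with $k$ replaced by $k-1$ (legitimate because $k\leq n$ gives $k-1\leq n$), obtaining
\[
\Q{k}(\mathcal{N}) \;\geq\; \frac{k-1}{k}\,(1-p)\,n^{\alpha}.
\]
It then suffices to show that the right-hand side of the first display does not exceed that of the second, i.e.\ that $\tfrac{n}{n+1}(2+pn^{\alpha}) \leq \tfrac{k-1}{k}(1-p)n^{\alpha}$. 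Dividing through by $(1-p)n^{\alpha}>0$ and simplifying $\tfrac{n(2+pn^{\alpha})}{(n+1)(1-p)n^{\alpha}} = \tfrac{2+n^{\alpha}p}{(1-p)(n+1)n^{\alpha-1}}$, this reduces to exactly the hypothesis $\tfrac{k-1}{k}\geq \tfrac{2+n^{\alpha}p}{(1-p)(n+1)n^{\alpha-1}}$. Chaining the three inequalities yields $\Pcap{n+1}(\mathcal{N}^c)\leq \Q{k}(\mathcal{N})$.

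There is no substantive obstacle here — the work is entirely in the two capacity bounds of Lemma~\ref{lemma:bounds}, which are already in place — so the only points demanding care are bookkeeping ones: the index shift $k\mapsto k-1$ in \eqref{eq:bound4} (noting in passing that for $k=1$ the implication is vacuous, its left-hand side being $0$ while its right-hand side is strictly positive), and the verification that the displayed fraction is precisely the algebraic rearrangement of the inequality between the $\Pcap{n+1}$-upper-bound and the $\Q{k}$-lower-bound.
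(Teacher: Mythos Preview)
Your proposal is correct and follows essentially the same approach as the paper: both arguments instantiate the upper bound \eqref{eq:bound3} (equivalently $U^c$) at $n+1$ uses and the lower bound \eqref{eq:bound4} (equivalently $L$) with the index shifted to yield $\Q{k}(\mathcal{N})$, and then observe that the resulting inequality rearranges to the stated hypothesis on $(k-1)/k$. Your additional checks---that the hypotheses of Theorem~\ref{th:main2} imply those of Lemma~\ref{lemma:bounds}, and that the $k=1$ case is vacuous---are welcome bookkeeping that the paper leaves implicit.
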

\begin{proof}
    For $k\leq n$, we have 
\begin{align*}
    \Pcap{k+1}({\cal N}^c_{n,p,d}) &\leq U^c_{n,p,\alpha}(k+1), \\
    \Q{k}({\cal N}^c_{n,p,d}) &\geq L_{n,p,\alpha}(k),
\end{align*}
where $U^c$ and $L$ are defined in Eq.~\eqref{eq:Ukc}, \eqref{eq:Lk}. The claim follows by noting that $L(k)\geq U^c(n+1)$ if and only if $k$ satisfies the desired inequality.
\end{proof}

\begin{theorem}\label{th:main3}
Let $p\in [0,1]$ and $n,\alpha\in \mathbb{N} \, (\alpha > 1)$ be such that $4/n^{\alpha -1} \leq p \leq 1/2 - 1/n^{\alpha-1}$. Let $d = 2^{n^\alpha}$ and $k_0 := (1-p-2/n^{\alpha -1})/p$. Fix $k\leq k_0$ and define 
\begin{equation}
    c_{n,p,\alpha}(k) := \frac{(1-p)k}{p - 2/n^{\alpha-1}}.
\end{equation}
Then, for $c(k) < j \leq n+1$ (provided such a $j$ exists),
\begin{equation}
    \mathcal{Q}^{(j)}( \mathcal{N}_{n,p,d} ) > \mathcal{P}^{(k)}_{\rm{tot}}( \mathcal{N}_{n,p,d} ) .
\end{equation}
\end{theorem}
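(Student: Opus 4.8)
The plan is to combine the upper bound on $\mathcal{P}^{(k)}_{\tot}(\mathcal{N}_{n,p,d})$ coming from Lemma~\ref{lemma:bounds} with the lower bound on $\mathcal{Q}^{(j)}(\mathcal{N}_{n,p,d})$ from the same lemma, and to find the crossover value of $j$. First I would note that $\mathcal{P}^{(k)}_{\tot}(\mathcal{N}) = \mathcal{P}^{(k)}(\mathcal{N}) + \mathcal{P}^{(k)}(\mathcal{N}^c)$. For $k\leq k_0$, Eq.~\eqref{eq:bound1} gives $\mathcal{P}^{(k)}(\mathcal{N}_{n,p,d}) \leq (1-2p)n^\alpha$, and Eq.~\eqref{eq:bound3} gives $\mathcal{P}^{(k)}(\mathcal{N}^c_{n,p,d}) \leq \tfrac{2n}{k} + \tfrac{k-1}{k} p n^\alpha \leq 2n + p n^\alpha$ (using $k\geq 1$; one can also just keep the $k$-dependent form). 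Adding these, $\mathcal{P}^{(k)}_{\tot}(\mathcal{N}_{n,p,d}) \leq (1-2p)n^\alpha + \tfrac{2n}{k} + \tfrac{k-1}{k} p n^\alpha = (1-p)n^\alpha - \tfrac{p n^\alpha}{k} + \tfrac{2n}{k}$, i.e. an upper bound of the form $(1-p)n^\alpha - \tfrac{1}{k}(p n^\alpha - 2n)$.

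Next, for the lower bound, I would invoke Eq.~\eqref{eq:bound4}: for $j-1\leq n$, that is $j\leq n+1$, we have $\mathcal{Q}^{(j)}(\mathcal{N}_{n,p,d}) \geq \tfrac{j-1}{j}(1-p)n^\alpha$. So it suffices to show that whenever $c(k) < j \leq n+1$, the inequality
\begin{equation*}
    \frac{j-1}{j}(1-p)n^\alpha > (1-p)n^\alpha - \frac{1}{k}\left(p n^\alpha - 2n\right)
\end{equation*}
holds. Rearranging, this is equivalent to $\tfrac{1}{j}(1-p)n^\alpha < \tfrac{1}{k}(p n^\alpha - 2n) = \tfrac{1}{k} n^\alpha (p - 2/n^{\alpha-1})$, i.e. to $j > \tfrac{(1-p) k}{p - 2/n^{\alpha-1}} = c_{n,p,\alpha}(k)$. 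Note $p - 2/n^{\alpha-1} > 0$ because $p \geq 4/n^{\alpha-1}$, so the division is legitimate and the inequality direction is preserved; this is exactly the hypothesis $j > c(k)$, and the constraint $j\leq n+1$ is what makes the $\mathcal{Q}^{(j)}$ lower bound of Lemma~\ref{lemma:bounds} applicable. Chaining the two bounds then yields $\mathcal{Q}^{(j)}(\mathcal{N}_{n,p,d}) > \mathcal{P}^{(k)}_{\tot}(\mathcal{N}_{n,p,d})$, as claimed.

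The only subtlety — and the step I would be most careful about — is making sure the two halves of the bound on $\mathcal{P}^{(k)}_{\tot}$ are each valid in the stated regime: the bound $\mathcal{P}^{(k)}(\mathcal{N}) \leq (1-2p)n^\alpha$ requires $k\leq k_0$, which is exactly the standing assumption, and the bound on $\mathcal{P}^{(k)}(\mathcal{N}^c)$ holds for all $k$ per Eq.~\eqref{eq:bound3}, so there is no incompatibility. One should also double-check the arithmetic identity $(1-2p)n^\alpha + \tfrac{k-1}{k}p n^\alpha = (1-p)n^\alpha - \tfrac{p n^\alpha}{k}$, which is immediate. Everything else is a routine rearrangement, so there is no genuine obstacle; the content is entirely in correctly assembling the pieces already proved in Lemma~\ref{lemma:bounds} and tracking the sign of $p - 2/n^{\alpha-1}$.
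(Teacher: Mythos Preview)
Your proof is correct and follows essentially the same route as the paper: you combine the upper bound $\mathcal{P}^{(k)}_{\tot}(\mathcal{N}_{n,p,d}) \leq (1-2p)n^{\alpha} + \tfrac{2n}{k} + \tfrac{k-1}{k}pn^{\alpha}$ from Eqs.~\eqref{eq:bound1} and \eqref{eq:bound3} with the lower bound $\mathcal{Q}^{(j)}(\mathcal{N}_{n,p,d}) \geq \tfrac{j-1}{j}(1-p)n^{\alpha}$ from Eq.~\eqref{eq:bound4}, and then solve the resulting inequality for $j$. Your additional remarks on the positivity of $p - 2/n^{\alpha-1}$ and on the range $j\leq n+1$ for applicability of Eq.~\eqref{eq:bound4} are exactly the checks the paper leaves implicit.
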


\begin{proof}
Lemma~\ref{lemma:bounds} shows that for $k\leq k_0:$
\begin{align} \label{eq:U'k}
     \mathcal{P}^{(k)}_{\rm{tot}}( \mathcal{N}_{n,p,d} ) \leq (1-2p)n^{\alpha} + \frac{2n}{k} + \frac{k-1}{k} pn^{\alpha} =: U'_{n,p,\alpha}(k), 
\end{align}
and $\mathcal{Q}^{(j)}( \mathcal{N}_{n,p,d}) \geq \frac{j-1}{j} (1-p)n^{\alpha} =: L_{n,p,\alpha}(j)$ for $j\leq n+1$. The result follows by noting that
\begin{align*}
    L_{n,p,\alpha}(j) > U'_{n,p,\alpha}(k) \iff j > \frac{(1-p)k}{p-2/n^{\alpha-1}}. 
\end{align*}
\end{proof}

A similar reasoning as above can be applied when $k>k_0$. In this case, Lemma~\ref{lemma:bounds} shows that 
\begin{equation}\label{eq:U''k}
    \forall k>k_0: \quad \mathcal{P}^{(k)}_{\rm{tot}}( \mathcal{N}_{n,p,d})\leq \frac{4n}{k} + \frac{k-1}{k} n^{\alpha} =: U''_{n,\alpha}(k),
\end{equation}
and the lower bound for $\mathcal{Q}^{(j)}$ remains the same: $\mathcal{Q}^{(j)}( \mathcal{N}_{n,p,d}) \geq L_{n,p,\alpha}(j)$ for $j\leq n+1$. Thus, we have
\begin{equation*}
    L_{n,p,\alpha}(j) > U''_{n,\alpha}(k) \iff \frac{j-1}{j} > \frac{4/n^{\alpha -1}+ k-1}{k(1-p)},
\end{equation*}
provided such a $j\leq n+1$ exists.
\newpage

\section{Rocket channels}\label{appen:rocket}

In this section, we provide a graphical proof of the superadditive nature of the Rocket channel when used jointly with erasure (Eq.~\eqref{eq:rocket-erasure-supadd}, \eqref{eq:rocket-erasure-comp-supadd}). The graphical presentation makes for a more intuitive and easily digestible argument. A quick summary of the necessary diagrammatic notation can be found in \cite[Section 3]{nechita2021graphical}. For more detailed expositions, we refer the readers to \cite{wood2015tensor, bridgeman2017hand}.

Recall that the rocket channel $R_d: A_1 \otimes A_2 \to B$ first applies local (independent) random unitaries $U,V$ on inputs $A_1, A_2$ respectively, and then couples them via a controlled phase gate $P=\sum_{i,j}\omega^{ij}\ketbra{i}_{A_1}\otimes \ketbra{j}_{A_2}$, where $\omega=e^{i2\pi /d}$. Finally, $A_2$ is discarded and Bob gets $A_1$ along with classical information about which local unitaries were applied. Hence, each random instance of the Rocket channel acting on an input $X_{A_1 A_2}$ can be depicted as follows:

\begin{equation*}
   \includegraphics[align=c]{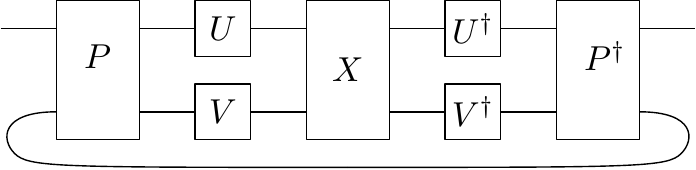},
\end{equation*}
where we have not shown the classical knowledge about $U,V$ that is also delivered to Bob. Note that the final output state will be obtained by taking an expectation over the random variables $U,V$. The complementary channel $R^c_d: A_1 \otimes A_2 \to E$ acts nearly identically, except that in the final step, $A_1$ is discarded and Eve gets $A_2$ along with the same classical information about which local unitaries were applied (which is not depicted below):

\begin{equation*}
   \includegraphics[align=c]{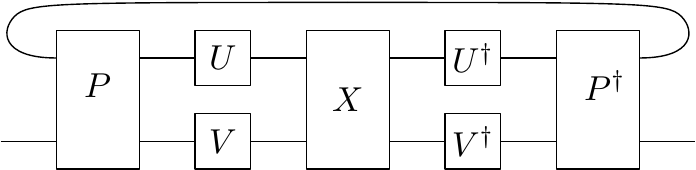}.
\end{equation*}
To transmit information by jointly using the Rocket channel (or its complement) with erasure, Alice first prepares a maximally entangled state and sends one half of it through the erasure channel ${\cal E}_{p,d}$. This establishes maximal entanglement between her and the receiver with probability $1-p$. Using this shared entanglement (shown in red in Figures \ref{fig:Rd} and \ref{fig:Rdc}), Alice and Bob can communicate through $R_d$ and $R^c_d$ at rate $\log d$ by following the steps described in Figures \ref{fig:Rd} and \ref{fig:Rdc}, respectively. With probability $p$, the erasure channel fails to distribute entanglement between Alice and the receiver, in which case the protocol fails. Hence, we get the following net rates of quantum communication:
\begin{equation*}
    \mathcal{Q}^{(1)}(R_d \otimes \mathcal{E}_{p,d}) \geq (1-p)\log d \quad\text{and}\quad \mathcal{Q}^{(1)}(R^c_d \otimes \mathcal{E}_{p,d}) \geq (1-p)\log d.
\end{equation*}

\begin{figure}[H]
    \centering
   \hspace{0.1cm}\includegraphics[scale=0.95]{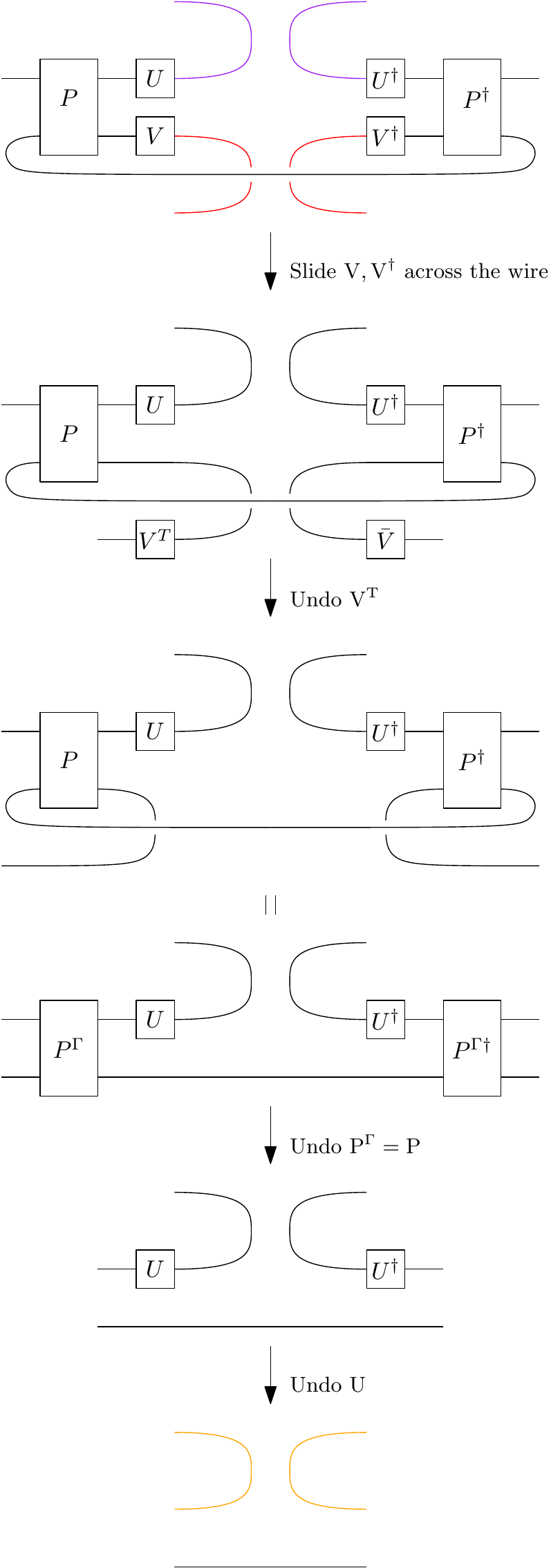}
   \vspace{0.5cm}
    \caption{Visual depiction of how the Rocket channel $R_d$ can be used to transmit information with the help of pre-shared entanglement. Alice and Bob start with a pre-shared maximally entangled state (shown in red). Alice locally prepares another maximally entangled state (shown in purple) and sends half of each of the entangled states through $R_d$ to Bob as shown. Hence, only the top two dangling wires are in Alice's possession while the bottom four are with Bob. Since Bob knows which local random unitaries $U,V$ are applied during the transmission, he can use the pre-shared entanglement with Alice to undo the phase coupling operation as described. Here, $P^{\Gamma}$ is the partial transpose of $P$ with respect to the second subsystem. The two parties finally end up sharing one maximally entangled state (shown in orange), which can be used to send quantum information at rate $\log d$.}
    \label{fig:Rd}
\end{figure}

\begin{figure}[H]
    \centering
    \hspace{0.1cm}\includegraphics[scale=0.95]{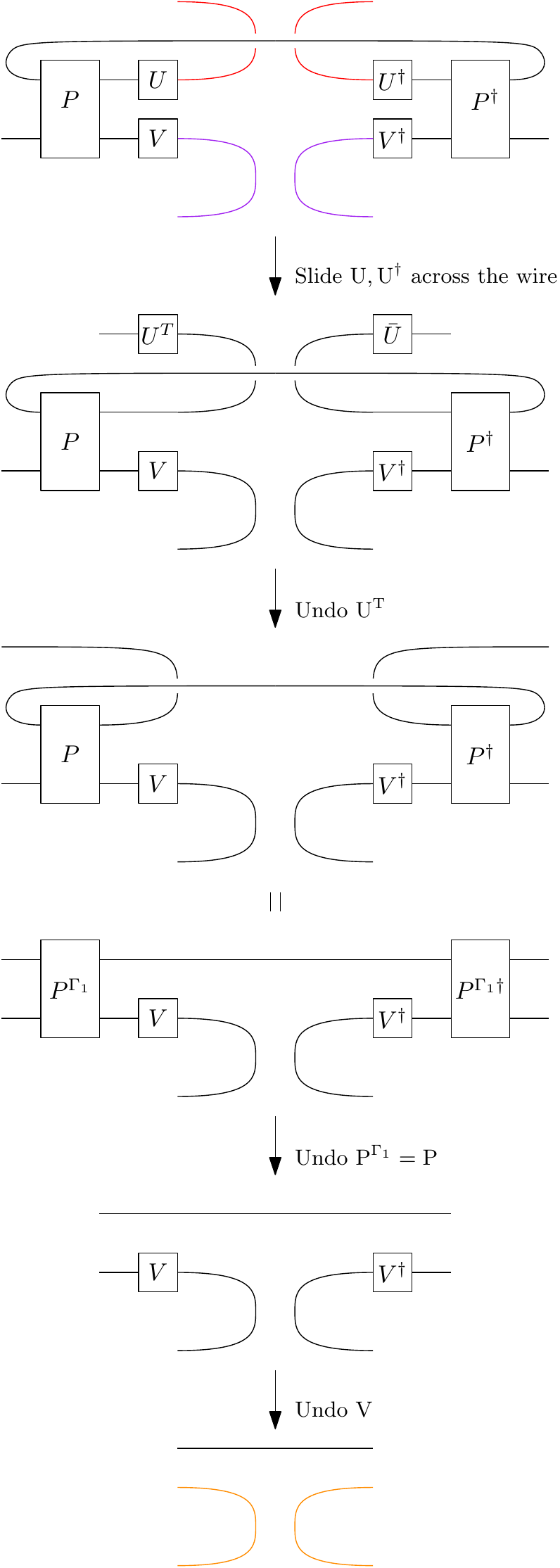}
    \vspace{0.5cm}
    \caption{Visual depiction of how the complementary Rocket channel $R^c_d$ can be used to transmit information with the help of pre-shared entanglement. Alice and Bob start with a pre-shared maximally entangled state (shown in red). Alice locally prepares another maximally entangled state (shown in purple) and sends half of each of the entangled states through $R^c_d$ to Bob as shown. Hence, only the bottom two dangling wires are in Alice's possession while the top four are with Bob. Since Bob knows which local random unitaries $U,V$ are applied during the transmission, he can use the pre-shared entanglement with Alice to undo the phase coupling operation as described. Here, $P^{\Gamma_1}$ is the partial transpose of $P$ with respect to the first subsystem. The two parties finally end up sharing one maximally entangled state (shown in orange), which can be used to send quantum information at rate $\log d$.}
    \label{fig:Rdc}
\end{figure}

\end{document}